\documentclass[journal]{IEEEtran}
\usepackage{xcolor}
\definecolor{MXY}{RGB}{252,8,235}
\usepackage{cite}
\usepackage{amsmath,amssymb,amsfonts}
\usepackage{amsmath}
\usepackage{booktabs,multirow}
\usepackage{amsthm} % Mathematical theorem environments

\usepackage{algorithm}
\usepackage{graphicx}
\usepackage{textcomp}
\usepackage{subcaption} 
\usepackage{algpseudocode}
\usepackage[colorlinks=true, allcolors=blue]{hyperref}
\usepackage{xcolor}
\usepackage{orcidlink}
\usepackage{optidef}

\newtheorem{theorem}{Theorem}

\newtheorem{proposition}{Proposition}

\newtheorem{remark}{Remark}
\DeclareMathOperator{\diag}{diag}

\begin{document}
\bstctlcite{IEEEexample:BSTcontrol}

\title{Optimal Multi-RIS Placement: Coverage-Guaranteed Sum Rate Maximization Under Inhomogeneous User Distributions}
%Graph Theory Based , Multi-RIS Placement with Performance Maximization and Bayesian Optimization

\author{\IEEEauthorblockN{Abhishek Rajasekaran\orcidlink{0009-0006-0455-8492}, \textit{Member, IEEE}, Mehdi Karbalayghareh\orcidlink{0000-0002-6857-5914}, \textit{Member, IEEE},\\ Xiaoyan Ma \orcidlink{0000-0002-5287-4215}, \textit{Member, IEEE}, Sunjung Kang\orcidlink{0000-0002-0573-720X}, \textit{Member, IEEE},\\
David J. Love\orcidlink{0000-0001-5922-4787}, \textit{Fellow, IEEE}, and Christopher G. Brinton\orcidlink{0000-0003-2771-3521}, \textit{Senior Member, IEEE}}
\thanks{A. Rajasekaran, M. Karbalayghareh, X. Ma, S. Kang, D. J. Love, and C. G. Brinton are with Electrical and Computer Engineering, Purdue University, IN 47907, USA (e-mail: \href{mailto:rajasek1@purdue.edu}{rajasek1@purdue.edu}; 
 \href{mailto:mkarbala@purdue.edu}{mkarbala@purdue.edu}; \href{mailto:ma946@purdue.edu}{ma946@purdue.edu};
 \href{mailto:kang392@purdue.edu}{kang392@purdue.edu};
\href{mailto:djlove@purdue.edu}{djlove@purdue.edu};
\href{mailto:cgb@purdue.edu}{cgb@purdue.edu}). This work was supported in part by the National Science Foundation (NSF) under grants ITE-2326898, EEC-1941529, CNS-2212565, CNS-2146171,  CNS-2225578 and by the Office of Naval Research (ONR) under grant N00014-21-1-2472.} \thanks{A preliminary version of this work has been accepted to the 2026 WiOpt Conference~\cite{conf}.}} %
% The paper headers
\markboth{IEEE Transactions on Wireless Communications}%
{A. Rajasekaran \MakeLowercase{\textit{et al.}} }
\maketitle

\begin{abstract}
Reconfigurable Intelligent Surface (RIS) has emerged as a promising next-generation technology that improves the throughput and coverage of a wireless system. The realization of the full potential of RISs in a wireless system is tied to their strategic spatial deployment. While existing literature on RIS placement primarily focuses on maximizing coverage, when multiple RIS placements guarantee the required coverage (happens quite often), these approaches fail to exploit prior user trends to choose the one that is most probable to maximize throughput. Thus, to enable throughput maximization while guaranteeing fairness, we formulate a novel hierarchical problem that maximizes the expected sum rate of the system while guaranteeing a certain probabilistic coverage, with the requisite minimum number of RISs deployed. To solve this multi-layered non-convex problem, firstly, we obtain a set of optimal points where we can deploy RISs to provide the coverage guarantee. Then, the least number of RISs that can guarantee the required coverage is obtained by a greedy minimum partitioning. Finally, a Bayesian optimization based approach is used to compute the optimal RIS placement. Numerical results are provided to show that the proposed framework consistently identifies placements that jointly achieve good coverage and throughput, without impractical assumptions.
\end{abstract}
%While existing literature primarily focuses on determining optimal RIS placement by enabling fairness i.e., maximizing coverage to navigate through obstacles, these approaches often fail to exploit the spatial distribution of user density to maximize throughput.
\begin{IEEEkeywords}
Reconfigurable Intelligent Surfaces (RIS), optimal positioning, coverage, set cover, intersection hypergraphs, minimum partitioning, Bayesian optimization.
\end{IEEEkeywords}
%, Joint beamforming

% For peer review papers, you can put extra information on the cover
% page as needed:
% \ifCLASSOPTIONpeerreview
% \begin{center} \bfseries EDICS Category: 3-BBND \end{center}
% \fi
%
% For peerreview papers, this IEEEtran command inserts a page break and
% creates the second title. It will be ignored for other modes.
\IEEEpeerreviewmaketitle

\section{Introduction}
\label{sec:intro}
\IEEEPARstart{N}{ext}-generation wireless as envisioned by major countries and companies~\cite{nextg2022roadmap,uusitalo2021hexa,fiercewireless2020iab} cater to a wide variety of applications including Immersive Extended Reality (XR), Tactile internet, Holographic Communication, Ubiquitous Intelligence, Large Telecom Models (LTMs), etc. Many of these necessitate an increase in data rates and an improvement in the coverage of the base station (BS). These requirements are reflected in the ITU-R Recommendation M.2160-0 for 6G (IMT-2030) which explicitly lists immersive communication (as an advancement of enhanced Mobile Broadband (eMBB)), Massive Communication and Ubiquitous Connectivity as usage scenarios~\cite{ITUR}. Reconfigurable Intelligent Surface (RIS) has been well-established to be capable of increasing the throughput and coverage of wireless systems~\cite{RIS3,coverage1}.

RIS is typically a planar surface with passive reflecting elements made of a metamaterial and is connected to a smart controller~\cite{RIS2,pathloss}. The nature of the metamaterial is such that it can help alter the amplitude and/or phase shifts of the reflected signal in an intelligent manner, enabling RIS to convincingly emerge as a tool to improve the throughput of the system~\cite{RIS1}. Due to the generated multipath, RIS can also enhance coverage by overcoming blockages created by obstacles~\cite{coverage2,coverage3}--a capability that is very valuable in next-generation wireless which is proposed to use higher-carrier-frequencies~\cite{ITUR}, exhibiting severe blockage issues~\cite{6glast,6gbrinton}. 
%Unlike conventional amplify-and-forward relays, these are very cheap and flexible; hence, they can be easily deployed in large numbers if needed~\cite{relay}.Thus, with estimated channel information, one can generate phases as required using the controller to ensure that the desired signals combine constructively and interferers combine destructively at the receiver, enhancing system performance~\cite{risone,ristwo}.

Given that enhancing coverage and throughput are the major uses of an RIS, its optimal placement in the network is the most crucial step that enables these enhancements~\cite{locbasis1,locbasis2}. This problem is challenging for many reasons including the practical conundrum that optimal RIS locations must be determined prior-deployment in order to maximize post-deployment coverage and throughput. Also, once the system is deployed with RISs at their optimal locations, then it is necessary to design optimal transmit beamformers and RIS phase shifts to maximize the throughput of an RIS-assisted system in real-time. Both these problems -- placement and beamforming have been investigated previously.

% \textcolor{blue}{Then before "related works", you need to add 1 (or 2) very important paragraph explaining why this problem is important and what the challenges are.} 

\subsection{Related Works}%An outage probability based approach for RIS placement that results in a similar close form expression is~\cite{locandpower}.

Many existing works that discuss RIS placement use coverage maximization approaches listed in~\cite{loc1,loc3}. These works propose to use geometric approaches to help the desired signals navigate around obstacles by positioning RIS(s) at relevant positions. However, when these algorithms result in multiple optimal RIS configurations that yield the required coverage -- a frequent occurrence, these algorithms lack the capability to identify that configuration that is most likely to maximize the throughput of the system. On the other hand, algorithms that obtain optimal RIS placement by maximizing the real-time throughput of the system like~\cite{loc2} (jointly with the transmit beamformers, RIS phase coefficients, and other variables) are not practical since they require the RIS location to change in real-time. Also, these approaches result in simplified purely-geometry dependent closed form solutions and work only for fixed number of users and fixed user locations. Certain works including~\cite{locapp1,locapp4,locapp3,locvisapp1,locvisapp2} have also discussed optimization of RIS placement for specific applications like Unmanned Aerial Vehicles (UAV), high-speed trains, Visual Light Communication (VLC) and autonomous vehicles. Because the objectives of these approaches are tailored to specific requirements of each application, none of them offer the generality of the proposed method, with universal applicability.   %It is shown later that we can do much better than such purely-distance dependent closed form solutions for RIS placement.%locapp2,

The other important problem, throughput maximization, is typically dealt with by jointly optimizing the transmit beamformer and RIS coefficients in real-time. We define this problem as the joint beamforming problem. By definition, to perform joint beamforming for the current `channel state', we need to obtain the corresponding channel state information (CSI). To address this, \cite{channelesti2, channelesti1,channelestisurvey1,channelestisurvey2} present various channel estimation approaches for RIS-assisted systems. These methods can be extended for the multi-RIS case by sequentially turning off and on each of the RISs. With estimated CSI available using one of these approaches, one can then perform joint beamforming to maximize the throughput of the system.~\cite{beam0,weightedsumrate,beam2, loc2} present various approaches of solving the joint beamforming problem; most common one using fractional programming based approach to maximize the throughput of the RIS-assisted wireless system. Though, in our work, we illustrate channel estimation and joint beamforming using one of the aforementioned methods, the proposed RIS placement algorithm remains effective for any channel estimation and/or joint beamforming technique as required. For instance, a recent work uses Orthogonal Complement Projection for channel estimation of RIS-assisted systems to reduce the overhead~\cite{channelestimulti}.  

In summary, based on the survey, we see that although the RIS-assisted system is well-studied, there is a practically relevant problem that remains unresolved. The existing coverage-based RIS placement approaches ensure fairness by reaching all users, but aren't capable of maximizing throughput (sum rate) and the sum rate maximization approaches require RIS location to change in real-time which isn't practical. Therefore, there is a clear scope to propose an RIS placement technique that does both -- \textit{to maximize expected system throughput while ensuring a level of fairness, making use of minimum resources.} 

\subsection{Contributions}
Motivated by the research gap, we propose a novel scheme to maximize the expected sum rate of the system with a probabilistic coverage guarantee, while using minimum number of RISs. The key contributions include:
\begin{itemize}
    \item Firstly, a detailed justification is provided to show that a simple coverage maximization or a sum rate maximization is an oversimplification. Then, a unique hierarchical softly-coupled formulation is proposed to better deal with the placement problem. More specifically, we propose to first compute the minimum number of RISs and a search space to provide a probabilistic coverage guarantee. Then, we propose to optimize RIS placement by maximizing the expected sum rate performance on this space. This formulation, to the best of our knowledge, is the first attempt to simultaneously consider expected throughput, probabilistic coverage and number of RISs.
    \item To solve for the search space where RISs can be placed to satisfy the probabilistic coverage requirement, firstly, we propose and prove a theorem to reduce this global probabilistic coverage guarantee to simpler local conditions that can be applied to the regions shadowed by each of the obstacles. To reduce the number of computations, instead of handling each obstacle and the areas they shadow individually, we propose to form clusters among obstacles. Then, a proposition is made and proved to define candidate sets for RIS placement that satisfy the derived local coverage requirements. Finally, we calculate these candidate sets using a visibility graph.% on this visibility graph.
    \item Then, to solve for the minimum number of RISs we require to satisfy the coverage requirement, we reduce the redundancy in the candidate sets defined. We construct an intersection hypergraph~\cite{hypergraph} using these candidate sets. Taking inspiration from~\cite{GMS}, we propose a greedy maximal hypergraph partitioning algorithm to reduce the number of candidate sets to the lowest possible. %Thus, we get the minimum number of RISs and also compute the corresponding reduced search space for RIS placement.
    \item Finally, we propose a novel technique that combines user distribution-driven learning and conventional joint beamforming techniques to calculate the optimal RIS placement. We use a Bayesian optimization (BO) based approach to maximize the expected sum rate ~\cite{bayesopt}.
    % \item Furthermore, the numerical results presented demonstrate that the proposed scheme simultaneously maximizes coverage by overcoming the obstacle configuration and optimizes sum rate performance by exploiting the spatial non-homogeneity of the user distribution. Results show that the proposed framework consistently identifies a point that jointly achieves strong coverage and strong performance, without requiring impractical system assumptions. To the best of our knowledge, this is the first work to achieve both objectives simultaneously in a practically relevant setting.
    % %Critically, the proposed hierarchical formulation avoids the cons of both extreme approaches — a purely coverage-driven approach that sacrifices performance by over-prioritizing complete coverage, and a purely performance-driven approach that risks leaving certain areas unserved.
\end{itemize}
\cite{conf} is our preliminary work that introduces this idea. In this work, we address its limitations with a general formulation for multiple RISs that provides explicit control over the throughput-coverage tradeoff and additionally incorporates resource minimization, search space determination that considerably reduces computational cost, and a more efficient BO.  %, it had three key limitations: it offered limited control over the coverage-throughput tradeoff, it lacked a resource minimization component, and it lacked a search space restriction procedure for reducing computational cost.

% A preliminary version of this work was presented for a single RIS case~\cite{conf}.

% \textcolor{blue}{MK: the contribution bullet points can be summarized, especially 1 and 2. }

\subsection{Organization and Notations}
The rest of the paper is organized as follows. Section \ref{sec:sysmodel} sets up the system model for the proposed problem. Then, Section \ref{sec:prblmform} provides a detailed discussion on the problem formulation. In Section \ref{sec:ss}, the computation of the search space for RIS placement and optimization of the number of RISs are discussed. In Section \ref{sec:perf}, the proposed novel method to maximize the expected system throughput is discussed. Then, Section \ref{sec:results} lays down the detailed numerical and analytical findings of the work. Finally, Section \ref{sec:conc} presents the conclusion.

The notations used in this paper as listed as follows. All scalar valued variables are listed normally (e.g.: $a$) whereas vectors and matrices are listed in bold (e.g.: $\boldsymbol{a}$). Throughout this work, braces $\{\cdot\}$ would denote a mathematical set of values. $|\cdot|$ and $\mathcal{R}e(\cdot)$ denotes the absolute value and real part of the enclosed value respectively. All norms in this paper are assumed to be L2 norms. $(\cdot)^T$, $(\cdot)^*$, $(\cdot)^\mathrm{H}$ indicate the transpose, conjugate, and conjugate transpose of the enclosed matrix, respectively. $\boldsymbol{I}_N$ indicates an identity matrix of size $N$. $\boldsymbol{1}_N$ denotes a column vector with all 1s of size $N$. $\diag(\cdot)$ denotes a diagonal matrix with the enclosed vector as the set of diagonal elements. $z_1\sim \mathcal{C}\mathcal{N}(\mu_1,\sigma_1^2)$ and $\boldsymbol{z}_2\sim \mathcal{C}\mathcal{N}(\boldsymbol{\mu}_2,\boldsymbol{R}_{2})$ denotes that $z_1$ and $\boldsymbol{z}_2$ are scalar and vector versions of circular symmetric complex Gaussian (CSCG) random variables with mean $\mu_1$, variance $\sigma_1^2$ and mean vector $\boldsymbol{\mu}_2$, covariance matrix $\boldsymbol{R}_{2}$ respectively. $\mathbb{E}[\cdot]$ denotes the statistical expectation operator. $\mathbb{R}$  and $\mathbb{C}$ stands for the real and complex sets respectively.  $\boldsymbol{1}_{\{\cdot\}}$ stands for an indicator function that takes value 1 when enclosed condition stands true and 0 when not. $G=(V,E)$ denotes a graph with the set of vertices $V$ and set of edges $E$.

\section{System Model}
\label{sec:sysmodel}

We consider a multi-cell wireless network served by a BS located at $\boldsymbol{o}_b\in\mathbb{R}^2$ as illustrated in Fig.~\ref{fig:sysmodel}. To eliminate inter-cell interference and simplify the analysis, we assume an orthogonal frequency reuse scheme among adjacent cells. Hence, without loss of generality, we focus our investigation on a single representative cell. We denote the cell using $\mathcal{O} \subset \mathbb{R}^2$. Within this cell, the scenario is modeled as a downlink multi-user multiple-input single-output (MU-MISO) system, where the $M$-antenna BS serves $K$ single-antenna users. The communication is assisted by $J$ RISs. 
%The cells are assumed to be circular for simplicity. Furthermore,  assume that all cells operate using orthogonal frequencies, which makes it sufficient to investigate just one of these cells.  Figures/Results

\begin{figure}[t!]
\centerline{\includegraphics[width=0.34\textwidth]{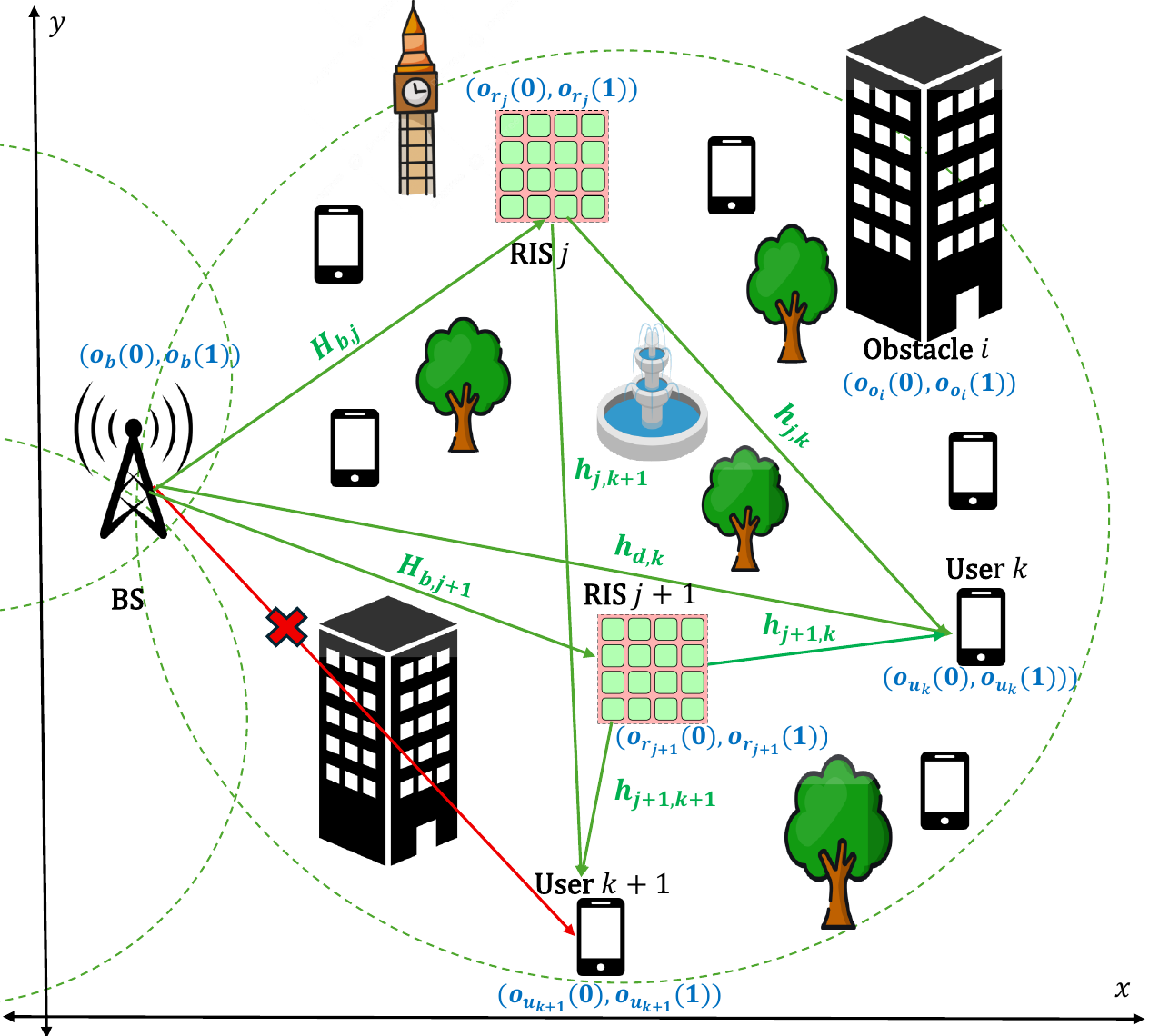}}
\caption{Multi-RIS-assisted MU-MISO system with obstacles. \label{fig:sysmodel}}
\end{figure} %Figures/

%Thus, looking at one such cell served by the BS, a multi-user multiple-input single-output (MU-MISO) system is considered where the BS serves multiple single-antenna users.
We also assume the presence of $I$ obstacles, some of which obstruct the line-of-sight (LoS) links between the BS and the users. For modeling purposes, all these obstacles are represented by two types: (1) circular such as pillars and (2) wall-type such as buildings \cite{loc1}. Let $\{\boldsymbol{o}_o\} \subset \mathcal{O}$ denote the set of centers of all obstacles. Circular-type obstacles are also associated with their radii $\{r_{ci}\}$. On the other hand, the wall-type obstacles are associated with their lengths $\{\ell_{wi}\}$ and orientations $\{\theta_{wi}\}$. We also assume that all obstacles are identical about the $z$-axis (making it sufficient to investigate just with their $x$ and $y$ coordinates). Thus, $(\{\boldsymbol{o}_o\},\{r_{ci}\},\{\ell_{wi}\}, \{\theta_{wi}\})$ is defined as the obstacle configuration. It is assumed to be \emph{static} and \emph{known} in the region\footnote{From the practical standpoint, this information is fully available to us once the region of deployment (like a specific school) is decided (then the positions of the buildings, stadiums, etc. in the school are known). Therefore, assuming knowledge of the obstacle configuration at the deployment stage is reasonable.}.

% Inhomogeneous Poisson Point Process (PPP) provides greater versatility in representing the spatial variability in dense environments~\cite{loc3}. We model the user locations by an inhomogeneous PPP $\Psi_u = \{\boldsymbol{o}^{(k)}_u\} \subset \mathcal{O}$ with intensity $\lambda_u(\boldsymbol{o})$ and total intensity $\Lambda_u =\int_{\mathcal{O}} \lambda_u(\boldsymbol{o})\,d\boldsymbol{o}$. Since their locations is a PPP, these user locations are i.i.d.\ with density
% \begin{equation}
%     f_u(\boldsymbol{o}) = \frac{\lambda_u(\boldsymbol{o})}{\Lambda_u}.
%     \label{eq:fu_def}
% \end{equation}

In practice, the exact locations of users are not available pre-deployment. Thus, we introduce an inhomogeneous spatial point process $\Psi_u = \{\boldsymbol{o}_u^{(1)}, \ldots, \boldsymbol{o}_u^{(K)}\}$ for modeling the user variability, of which several models (e.g., Poisson, Binomial, etc) are special cases. Here, $K$ denotes the number of users in the cell $\mathcal{O}$ drawn from some distribution $K \sim f_K$ with mean $\mathbb{E}[K] = \Lambda_u = \int_{\mathcal{O}} \lambda_u(\boldsymbol{o})\,d\boldsymbol{o}$, where $\lambda_u(\boldsymbol{o})$ denotes the spatial intensity at a point $\boldsymbol{o}\in \mathbb{R}^2$ over $\mathcal{O}$. For an instance of $K$, the user locations 
are assumed to be independently and identically distributed with the given probability density function (pdf)
\begin{equation}
   \boldsymbol{o}_u^{(1)}, \ldots, \boldsymbol{o}_u^{(K)} \sim  f_u(\boldsymbol{o}) = \frac{\lambda_u(\boldsymbol{o})}{\Lambda_u}, \boldsymbol{o} \in \mathcal{O}.
    \label{eq:fu_def}
\end{equation}
In words, the density at each point in the considered cell is the spatial intensity at the point normalized over the total spatial intensity of the cell. For instance, if this spatial intensity is set to be equal throughout the cell, it would be the uniform case. In practice, historic user data can be used to determine these intensity parameters and the distribution of $K$. 

%using historic data, one can deduce the spatial variability in user density across the cell. Unlike conventional spatial distributions like a 2D Gaussian or uniform models, we introduce a 

% This formulation is deliberately general: the proposed framework only 
% requires knowledge of the spatial density $f_u(\boldsymbol{o})$ in 
% \eqref{eq:fu_def} and is agnostic to the specific choice of $\mathcal{D}_K$ 
% governing the number of users. For instance, choosing $K \sim 
% \text{Poisson}(\Lambda_u)$ recovers the inhomogeneous Poisson point process 
% (PPP) model, under which the user locations $\Psi_u$ are i.i.d.\ with density 
% $f_u(\boldsymbol{o})$ as in \eqref{eq:fu_def}, and $K$ and $\Psi_u$ are 
% mutually independent~\cite{loc3}. Other choices, such as a fixed or binomially 
% distributed $K$, yield a binomial point process (BPP) with the same spatial 
% density $f_u(\boldsymbol{o})$. The specific instantiation used for numerical 
% evaluation is detailed in Sec.~\ref{sec:simulations}.

% \subsection{Channel Modeling}
% \label{sec:sysmodel}
For the upcoming discussion on the link-level model of the considered system, we consider a particular realization of the user process, $\Psi_u$. Let the users be indexed as $k \in \{1, \ldots, K\}$ where $K=|\Psi_u|$ and each $k$ corresponds to a location $\boldsymbol{o}^{(k)}_{u} \in \Psi_u$. The direct link from BS to user $k$ is Rayleigh modeled
\begin{equation}
   \boldsymbol{h}_{d,k} = \sqrt{\beta_{d,k}}\tilde{\boldsymbol{h}}_{d,k},
%\sqrt{\beta_0(d_{b{u}})^{-\kappa_{bu}}}
\label{eq:1}
\end{equation}
where $\beta_{d,k}$ denotes linear path loss and $\tilde{\boldsymbol{h}}_{d,k} \in \mathbb{C}^{M\times 1} \sim \mathcal{C}\mathcal{N}(\boldsymbol{0},\boldsymbol{I}_M)$. To illustrate the effect of obstacles in a next-generation obstacle-deterred environment, we assume that when the link is obstructed by an obstacle, the link power, $\sqrt{\beta_{d,k}}$ is reduced to a negligible level due to penetration loss.

We assume each RIS has $N$ passive reflecting elements located at $\boldsymbol{O}_{r} = \left[\boldsymbol{o}^{(1)}_{r},\ldots,\boldsymbol{o}^{(J)}_{r}\right] \in \mathbb{R}^{2\times J}$ such that $\boldsymbol{o}^{(j)}_{r}\in\mathcal{O}$ for each $j \in \{1,\ldots,J\}$. The links between BS and each RIS~$j$ and between each RIS~$j$ and the $k^{th}$ user are Rician
\begin{align}
    \boldsymbol{H}_{b,j} &= \sqrt{\beta_{b,j}}\left(\sqrt{\frac{T_1}{1+T_1}}\boldsymbol{\bar{H}}_{b,j}+ \sqrt{\frac{1}{1+T_1}}\tilde{\boldsymbol{H}}_{b,j}\right),\\
    \boldsymbol{h}_{j,k} &= \sqrt{\beta_{j,k}}\left(\sqrt{\frac{T_2}{1+T_2}}\boldsymbol{\bar{h}}_{j,k}+ \sqrt{\frac{1}{1+T_2}}\boldsymbol{\tilde{h}}_{j,k}\right) ,
\end{align}
where $\beta_{b,j}, \beta_{j,k}$ denote the path loss components. $\tilde{\boldsymbol{H}}_{b,j} \in \mathbb{C}^{N\times M}$, $ \boldsymbol{\tilde{h}}_{j,k}\in \mathbb{C}^{N\times1}$ and $\boldsymbol{\bar{H}}_{b,j}=\boldsymbol{a}_N(\vartheta
)\boldsymbol{a}_{M}^{\mathrm{H}}(\phi)$, $\boldsymbol{\bar{h}}_{j,k} = \boldsymbol{a}_N(\zeta_k)$ respectively denote the unit-variance CSCG and deterministic components of the links. Here, $\boldsymbol{a}_i$ is the steering vector of size $i$ for $i\in\{N,M\}$ and $\vartheta, \phi,\zeta_k$ are the angular parameters. $T_1$ and $T_2$ are the Rician parameters in linear scale. Like the direct link, path loss coefficients follow the same model when an obstacle lies on the respective link. Furthermore, the reflection matrix of RIS $j$ is
%%% CSCG defined in notations!
\begin{align}
    \boldsymbol{\Theta}_j =  \text{diag}(\boldsymbol{\theta}_j),
\end{align}
where $\boldsymbol{\theta}_j = [\theta^{(1)}_{j},\ldots,\theta^{(N)}_{j}]^T$, such that $\theta^{(n)}_{j} = e^{i\varphi^{(n)}_{j}}$ is the phase-shift introduced at the $n^{th}$ element of RIS $j$, for all $1 \le n \le N$. Let the combined signal transmitted from the BS be $\boldsymbol{x}=\sum_{i=1}^{K} \boldsymbol{w}_i x_i$ where $\boldsymbol{w}_i \in \mathbb{C}^{M \times 1}$ is the transmit beamformer and $x_i$ the unit-power transmit symbol corresponding to the $i^{th}$ user, for each $1\le i \le K$. We set the net indirect link as \( \boldsymbol{H}_{j,k} = \text{diag}(\boldsymbol{h}_{j,k}^{\mathrm{H}}) \boldsymbol{H}_{b,j} \). Thus, the net end-to-end link from BS and the corresponding received signal at $k^{\text{th}}$ user are respectively given by $\boldsymbol{h}_k = (\boldsymbol{h}_{d,k}^{\mathrm{H}}+\sum_{j=1}^J\boldsymbol{\theta}_j^T \boldsymbol{H}_{j,k})$ and 
\begin{equation}
y_k = \boldsymbol{h}_k \boldsymbol{x} + z_k.
\end{equation}
% \begin{align}
% y_k &= \left( \boldsymbol{h}_{d,k}^{\mathrm{H}} + \sum_{j=1}^{J}\boldsymbol{h}_{j,k}^{\mathrm{H}} \boldsymbol{\Theta}_j \boldsymbol{G} \right) \sum_{i=1}^{K} \boldsymbol{w}_i s_i + z_k,
% \end{align}
where \( z_k \sim \mathcal{CN}(0, \sigma^2) \) is the AWGN at the $k^{th}$ user. %%% N_0 is used in the channel estimation paper we have cited, hence for uniformity
Thus, the signal-to-interference-plus-noise ratio (SINR) at user~$k$ is
\begin{equation}
\gamma_k = \frac{\|\boldsymbol{h}_{k} \boldsymbol{w}_{k} \|^2}{\sum_{i \ne k} \| \boldsymbol{h}_{k} \boldsymbol{w}_{i} \|^2 + \sigma^2}.
\label{eq:SINR}
\end{equation}

\begin{remark}
Notably, in the aforementioned model, we have assumed that the links that involve reflections off more than one RIS simultaneously before landing at one of the users are low in power and thus negligible in modeling~\cite{loc1,loc2,loc3}. \label{rm:nohops}
\end{remark}
\section{Problem Formulation}
\label{sec:prblmform}
In this section, we formulate an optimization problem to obtain optimal multi-RIS placement while accounting for the throughput--fairness tradeoff. Firstly, we define the optimal instantaneous sum rate of the considered system for an RIS deployment $ \boldsymbol{O}_{r}$, a realization $\Omega = \left( \Psi_u, \{\boldsymbol{h}_k\}_{k=1}^K \right)$ of the user process and their corresponding random channels to be
\begin{subequations}
\begin{align}
\mathbf{P0}: R^*(\boldsymbol{O}_{r},\Omega) &= \max_{\boldsymbol{W}, \boldsymbol{\Theta}} \sum_{k=1}^{K} \log(1 + \gamma_k) \nonumber\\
\text{s.t.} \quad & \left|\theta_j^{(n)}\right| = 1, \quad \forall \ j, n, \label{subeq:RIS3}\\
& \sum_{k=1}^{K} \|\boldsymbol{w}_k\|_2^2 \le P_T. \label{subeq:power3}
\end{align}
\end{subequations}
Here, considering an instantiation of $\{\boldsymbol{h}_k\}_{k=1}^K$ at each instant, in wireless terms translates to quasi-static flat fading model where it remains static within a coherence time, $\tau_{tot}$. Similarly, considering an instantiation of the user process, $\Psi_u$ translates to the assumption that at a given time instant during the real-time optimization, user locations are deterministic and can be estimated while estimating CSI. $\boldsymbol{\Theta} = [\boldsymbol{\theta}_1,\ldots,\boldsymbol{\theta}_J]\in\mathbb{C}^{N\times J}$ is the accumulated phase matrix of the RISs and $\boldsymbol{W} = [\boldsymbol{w}_1,\ldots,\boldsymbol{w}_K]\in \mathbb{C}^{M\times K}$ the accumulated beamformer matrix. The feasibility set for the generation of phases at each of the $N$ elements of each of the $J$ passive RISs is given in~\eqref{subeq:RIS3} and ~\eqref{subeq:power3} presents the total power constraint for the beamformer design where $P_T$ is the maximum total transmit power. Then, the problem of interest is formulated as given below
\begin{subequations}
\begin{align}
\mathbf{P1}: J^* = & \min_{\boldsymbol{O}_{r}} \quad  J(\boldsymbol{O}_{r}) \nonumber\\
\text{s.t.} \quad & p_{\mathrm{cov}}( \boldsymbol{O}_{r}) \ge p_{c_{th}}, \label{subeq:coverage}\\
& L\left(\boldsymbol{o}_b, \boldsymbol{o}_{\mathrm{r}}^{(j)};\boldsymbol{O}_{r}\right) = 1, \quad \forall  \ j. \label{subeq:BS}
\end{align}
\end{subequations}
\begin{subequations}
\begin{align}
\mathbf{P2}: \boldsymbol{O}_{r}^* = &\arg\max_{\boldsymbol{O}_{r}}\mathbb{E}_{\Omega}\left[  R^*( \boldsymbol{O}_{r}, \Omega) \right] \nonumber\\
\text{s.t.} \quad & J(\boldsymbol{O}_{r}) = J^*\label{subeq:J},\\
&p_{\mathrm{cov}}(\boldsymbol{O}_{r}) \ge p_{c_{th}}, \label{subeq:coverage2}\\
 & L\left(\boldsymbol{o}_b, \boldsymbol{o}_{r}^{(j)};\boldsymbol{O}_{r}\right) = 1, \quad \forall \ j. \label{subeq:BS2}
\end{align}
\end{subequations}
The first subproblem, $\mathbf{P1}$ is formulated to compute the least number of RISs  that can achieve the required minimum level of coverage from the BS. More precisely, we minimize $J=\mathrm{cols}(\boldsymbol{O}_r)$, i.e., the number of columns of $\boldsymbol{O}_r$ for varying  RIS placements $\boldsymbol{O}_r$ that satisfy the minimum coverage requirements. We define coverage probability $p_{\mathrm{cov}}(\boldsymbol{O}_{r})$ to be the expected fraction of users that can be served in $\mathcal{O}$. This is mathematically defined using the user point process as
\begin{equation}
    p_{\mathrm{cov}}(\boldsymbol{O}_{r}) = \int_{\mathcal{O}} \boldsymbol{1}_{\left\{L_{\mathrm{net}}(\boldsymbol{o}; \boldsymbol{O}_{r}) \ge 1\right\}} f_u(\boldsymbol{o}) d\boldsymbol{o},\label{eq:covdef}
\end{equation}
where the probability is evaluated by integrating the pdf $f_u$ over the regions in the considered cell that are reachable from the BS by at least one path (directly or via one of RISs). This reachability condition is mathematically modeled using $L_{\mathrm{net}}(\boldsymbol{o};\boldsymbol{O}_{r}) =  L(\boldsymbol{o}_b, \boldsymbol{o})+\sum_{j=1}^J L(\boldsymbol{o}_b, \boldsymbol{o}_{r}^{(j)};\boldsymbol{O}_{r})L(\boldsymbol{o}_{r}^{(j)}, \boldsymbol{o};\boldsymbol{O}_{r})$ which is the number of paths that can reach a point $\boldsymbol{o}$ from the BS, for a given RIS deployment $\boldsymbol{O}_{r}$, where $L(\boldsymbol{o}_1,\boldsymbol{o}_2)$ is an indicator that takes $1$ and $0$ when an LoS path exists and does not exist respectively between any $\boldsymbol{o}_1,\boldsymbol{o}_2 \in \mathbb{R}^2$. $p_{c_{th}}$ is set as the minimum probability of coverage that we wish to guarantee in~\eqref{subeq:coverage}. Equation~\eqref{subeq:BS} further clarifies the search space by necessitating an LoS link from the BS to each point where an RIS could be potentially placed. We denote these two constraints together as the dual-visibility requirement. 

Having obtained the least possible $J$ and the corresponding search space where RISs need to be placed to guarantee the required coverage in $\mathbf{P1}$, we formulate $\mathbf{P2}$ to compute that RIS configuration that is most probable to achieve the highest throughput while satisfying the conditions obtained from $\mathbf{P1}$. Formally, we propose to maximize the expectation of optimal sum rate over the random channels and the user process while satisfying the coverage conditions from $\mathbf{P1}$. %$\tau_{ov}$ is the time for channel estimation which reduces the achievable rate. Since we assume orthogonal feedback resources, this factor is not a function of $K$ because we are able to send and receive symbols at the same time from/to all users from BS.

The coverage-throughput tradeoff can be controlled by varying $p_{c_{th}}$ in the proposed formulation $\mathbf{P0}$--$\mathbf{P2}$. For instance, if set close to 100 \%, then problem attaches maximum importance to coverage, resulting in a much reduced space that satisfies the coverage constraints, reducing the impact that throughput maximization can have because it is constrained within this reduced search space. On the contrary, with low $p_{c_{th}}$, the proposed problem provides a larger search space for throughput maximization, increasing its impact.
%We formulate $\mathbf{P1}$ to compute the minimum number of RISs needed and the corresponding search space where we can deploy RISs to guarantee $p_{c_{th}}$. This conveniently helps us overcome the integer problem and multi-objective problem if $\mathbf{P1}$ and $\mathbf{P2}$ are combined as a single problem.
%modularity of being able to employ existing or future beamforming techniques is a key feature of the proposed problem. 
%Thus, the proposed formulation maximizes expected system throughput while minimizing $J$ subject to a dual-visibility requirement. It is noteworthy that the proposed formulation $\mathbf{P2}$ is general that can be used to maximize both sum rate performance and coverage based on the tradeoff parameter, $p_{c_{th}}$. If set close to 100\%, then problem mimics a simple coverage maximization problem, while if we set close to $0$\%, it turns into a purely throughput based placement problem.

\begin{remark}
This formulation is a `softly-coupled' optimization unlike a joint one. In Section \ref{sec:intro}, we argued that the joint case isn't practical for an RIS-assisted system since we need $\boldsymbol{O}_r$ before deployment and that it cannot be allowed to be optimized in real-time with other variables that need to be optimized in real-time. Also, in the purely joint case, every variable of optimization is re-optimized based on the optimal values of every other, alternatively. But in our formulation, in $\mathbf{P2}$, while the optimal RIS placement $\boldsymbol{O}_{r}$ gets re-optimized based on $J$ but $J$ is not re-optimized based on $\boldsymbol{O}_{r}$. As a result, though $\boldsymbol{O}_{r}$ is used in determining $J$ while assuring coverage, we allow it to be re-optimized with respect to throughput, allowing the RIS positions to align itself in order to provide coverage to overcome the obstacles and also exploit the spatial user density to maximize throughput. $\boldsymbol{\Theta}$ and $\boldsymbol{W}$ are set to independently maximize real-time throughput without any coverage constraints and $J$ is set to independently optimize coverage. We also ensure that the formulation avoids a multi-objective problem with an integer optimization in $J$. 
\end{remark}

% \begin{equation}
% \begin{aligned}
% \mathbf{P2}:&\max_{J, \boldsymbol{O}_{\mathrm{r}}, \boldsymbol{\Theta}, \boldsymbol{W}} \quad  \sum_{k=1}^{K}\mathrm{P}\left\{ L^{(k)}_{\mathrm{net}}\ge1 \right\}  \\
% \text{s.t.} \quad & J = \min_{J,\boldsymbol{O}_r,\boldsymbol{\Theta},\boldsymbol{W}} J,\\
% & L\left(\boldsymbol{o}_b, \boldsymbol{O}_{\mathrm{r}}^{(g)}\right) = 1, \quad \forall \ g, \\
% & \left|\theta_j^{(n)}\right| = 1, \quad \forall \ j, n, \\
% & \sum_{k=1}^{K} \|\boldsymbol{w}_k\|^2 \le P_T.
% \end{aligned}
% \end{equation}

% \begin{equation}
% \begin{aligned}
% &\max_{\boldsymbol{O}_{\mathrm{r}}, \boldsymbol{\Theta}, \boldsymbol{W}} \mathbb{E}\left[\sum_{k=1}^{K} \log ( 1 + \gamma_k )\right] \\
% \text{s.t.} \quad 
% & \left|\theta_g^{(n)}\right| = 1, \quad \forall \ g, n, \\
% & \sum_{k=1}^{K} \|\boldsymbol{w}_k\|^2 \le P_T.
% \end{aligned}
% \end{equation}

% Here $L(o_1, o_2) = 1$ stands for the ability of a link to be established between $o_1$ and $o_2$. Also notice that the lower bound on probability of existence of a link between an RIS and any user has been fixed to be 95\%.

\section{Search Space Determination and Resource Minimization}
\label{sec:ss}
Firstly, a technique is proposed to compute the space spanned by constraints~\eqref{subeq:coverage} and~\eqref{subeq:BS}. Then, an elegant method is proposed to solve the problem, $\mathbf{P1}$. %the resource minimization problem

%Finally, a technique is proposed for solving the performance maximization problem $\mathbf{P1.2}$ with the optimal number of RISs obtained previously. 
\subsection{Preliminary Processing}
\subsubsection{Discretization} To optimize the placement of the RISs, we first discretize the continuous cell $\mathcal{O}$ into a finite and uniform grid of points $\mathcal{S} = \{\boldsymbol{s}_1, \boldsymbol{s}_2, \dots, \boldsymbol{s}_{N_{\text{grid}}}\}\subset \mathbb{R}^2$ where each $\boldsymbol{s}_i$ corresponds to a spatial region $A_i$ such that the corresponding intensity is $\Lambda(\boldsymbol{s}_i) =\int_{A_i}\lambda_u(\boldsymbol{o})d\boldsymbol{o}$ and the corresponding number of users, $N(A_i) \sim f_K(\Lambda(\boldsymbol{s}_i))$. A subset of these points correspond to the obstacle configuration, $\mathcal{Q} \subset \mathcal{S}$. We also define $\boldsymbol{s}_b \in \mathcal{S}$ to be the discretized version of $\boldsymbol{o}_b$ defined earlier. We define `visible set' as the set of vertices that have a direct LoS to the BS, $\mathcal{V} = \{\boldsymbol{s}\in \mathcal{S}\;|\; L(\boldsymbol{s}_b,\boldsymbol{s}) = 1\}$.%\setminus\mathcal{Q}

Finally, we define $\mathcal{S}^* = \{s^*\in \mathcal{S} | \Lambda(s^*)> 0 \} \subset \mathcal{S}$ to be a subset of $\mathcal{S}$ which have a positive user intensity as per the defined user distribution $f_u$. This is the set of points we need to extend coverage from the BS. We also define $\mathcal{S}^\prime=\mathcal{S}\setminus\mathcal{S}^*$ which are those points which have non-positive user intensity.
\subsubsection{Obstacle Clusters} Instead of handling $I$ obstacles and the areas they shadow individually, we form $T \ll I$ physically meaningful clusters among the obstacle points in $\mathcal{Q}$ using the Density-Based Spatial Clustering of Applications with Noise (DBSCAN) algorithm \cite{dbscan} which is particularly effective here as it does not require the number of clusters to be pre-defined and is effective with a wide variety of shapes. These clusters $\mathcal{Q}_i\text{ }\forall 1 \le i \le T$ are formed based on two DBSCAN parameters: neighborhood radius, $\epsilon$ and a lower bound on the number of points in a cluster $N_{\mathrm{DBSCAN}}$. This allows us to treat each of these clusters as ``large obstacles'', considerably reducing computations as shown in Section \ref{subsec:complexity}. Assuming all points are clustered, these form a disjoint partition $\mathcal{Q}_i \cap \mathcal{Q}_j = \emptyset$  for all $i \neq j$, $1 \le i,j \le T$, and $\bigcup_{i=1}^T\mathcal{Q}_i = \mathcal{Q}$. %Default numbers for these DBSCAN parameters are given while discussing numerical results,which is the typical case,
\subsection{Search Space Determination} 
\label{subsubsec:const}
An undirected visibility graph \cite{graphtheory} $G_{\mathrm{vis}} = (V_{\mathrm{vis}}, E_{\mathrm{vis}})$ with $V_{\mathrm{vis}} = \mathcal{S}$ and $E_{\mathrm{vis}} = \{(\boldsymbol{s}_1,\boldsymbol{s}_2),(\boldsymbol{s}_2,\boldsymbol{s}_1) \;|\; L(\boldsymbol{s}_1,\boldsymbol{s}_2) = 1,\boldsymbol{s}_1,\boldsymbol{s}_2\in V_{\mathrm{vis}}\}$ is considered. An edge exists between two vertices if no obstacle lies in the LoS path between them. 

Let the set $\mathcal{U} = \mathcal{S}^* \setminus \mathcal{Q} \setminus \mathcal{V}$ indicate the set of points where user intensity is positive (user can exist according to the spatial process assumed), an obstacle is absent and is not yet reachable from the BS. In other words, this is the set of points to which we need to maximize extension of coverage. Adopting a targeted mitigation strategy, we partition this set by cause of blockage. In our model, the cause of blockage for any point $\boldsymbol{u} \in \mathcal{U}$ is the presence of an obstacle, i.e., at least one point from one or more of the $T$ obstacle clusters $\{\mathcal{Q}_i\}_{i=1}^T$ lie on its link to the BS. For each $\boldsymbol{u}\in\mathcal{U}$, let
\begin{equation}
    \mathcal{Q}(\boldsymbol{u}) = \{\boldsymbol{q}_1,\dots,\boldsymbol{q}_{N_{\boldsymbol{u}}}\} \subseteq \mathcal{Q}, \qquad N_{\boldsymbol{u}} \ge 1,
\end{equation}
be the ordered set of obstacle points on the LoS link between BS and $\boldsymbol{u}$ as indicated below
\begin{equation}
\{(\boldsymbol{s}_b,\boldsymbol{q}_1), (\boldsymbol{q}_1,\boldsymbol{q}_2),\dots,(\boldsymbol{q}_{N_{\boldsymbol{u}}}, \boldsymbol{u})\} \subset E_{\mathrm{vis}}.
\end{equation}
$N_{\boldsymbol{u}} \ge1$ for any $\boldsymbol{u}\in\mathcal{U}$ because every point in the set is not yet visible from the BS, by definition. We then define the set of points in $\mathcal{U}$ that are blocked `first' by obstacle $\mathcal{Q}_i$ as
\begin{equation}
    \mathcal{U}_i = \big\{ \boldsymbol{u} \in \mathcal{U} \mid \boldsymbol{q}_1 \in \mathcal{Q}_i \big\}, \qquad 1 \le i \le T.
\end{equation}
Since $\mathcal{Q} = \bigcup_{i=1}^T \mathcal{Q}_i$ and $\mathcal{Q}(\boldsymbol{u}) \neq \emptyset \text{ } \forall\boldsymbol{u}\in\mathcal{U}$, $\bigcup_{i=1}^T \mathcal{U}_i = \mathcal{U}$. Importantly, it is a disjoint union since uniquely characterized by one `first' blocking point $\boldsymbol{q}_1$ belonging to exactly one $\mathcal{Q}_i$. 

Since we have $T$ sets of points/discretized regions $\{\mathcal{U}_i\}_{i=1}^T$ shadowed by $T$ ``large obstacles'', it is easy to show that we need a maximum of $T$ RISs to extend coverage to each of these sets \cite{loc3}. Firstly, we assume we need all $T$ RISs. We need to compute $T$ preliminary candidate sets, each of which denote the set of all possible locations for the corresponding RIS to be able to extend coverage to a `considerable percentage' of the respective shadowed region. Equation~\eqref{subeq:coverage} defines the constraint for deciding the percentage. We state and prove the following theorem to achieve the coverage constraint.

\begin{theorem}
\label{lm:cov}
Assuring $p_{c_{th}}$ percent coverage in each shadowed region $\mathcal{U}_i, 1 \le i \le T$ as given by
\begin{equation}
\frac{\sum_{\tilde{\boldsymbol{s}}\in\mathcal{U}_i} \boldsymbol{1}_{\{L_{\mathrm{net}(\tilde{\boldsymbol{s}})} \geq 1\}} \Lambda(\tilde{\boldsymbol{s}})}{\sum_{\tilde{\boldsymbol{s}}\in\mathcal{U}_i}\Lambda(\tilde{\boldsymbol{s}})}  \geq p_{c_{\mathrm{th}}} 
    \label{eq:local}
\end{equation}
guarantees the global constraint~\eqref{subeq:coverage}.
\end{theorem}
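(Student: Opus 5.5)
We argue in the discretized setting, where the global coverage probability \eqref{eq:covdef} becomes
\begin{equation*}
p_{\mathrm{cov}}(\boldsymbol{O}_r)=\frac{\sum_{\boldsymbol{s}\in\mathcal{S}}\boldsymbol{1}_{\{L_{\mathrm{net}}(\boldsymbol{s})\ge 1\}}\Lambda(\boldsymbol{s})}{\sum_{\boldsymbol{s}\in\mathcal{S}}\Lambda(\boldsymbol{s})}.
\end{equation*}
This holds because $f_u=\lambda_u/\Lambda_u$ and $\Lambda(\boldsymbol{s}_i)=\int_{A_i}\lambda_u$, so the denominator is $\Lambda_u$. The plan is to split the index set $\mathcal{S}$ into pieces on which coverage is either trivially perfect or controlled by \eqref{eq:local}. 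We then show that the global ratio is a weighted average (a mediant) of local ratios, each at least $p_{c_{th}}$.

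First we partition. Write $\mathcal{S}=\mathcal{S}^\prime\cup\mathcal{Q}\cup\mathcal{V}\cup\mathcal{U}$, and split further $\mathcal{U}=\bigsqcup_{i=1}^T\mathcal{U}_i$, which is disjoint as argued before the theorem. Possible overlaps among $\mathcal{S}^\prime,\mathcal{Q},\mathcal{V}$ only need to be handled so that each point is counted once. We therefore use the disjoint decomposition $\mathcal{S}=\big(\mathcal{S}^*\cap\mathcal{V}\setminus\mathcal{Q}\big)\sqcup\big(\mathcal{S}^*\cap\mathcal{Q}\big)\sqcup\mathcal{S}^\prime\sqcup\bigsqcup_i\mathcal{U}_i$. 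Points in $\mathcal{S}^\prime$ have $\Lambda=0$ and contribute nothing to either sum. Points in $\mathcal{S}^*\cap\mathcal{Q}$ are obstacle points, where users cannot physically be located. We will either assume that the intensity vanishes there or note that the modelling assumes obstacle cells carry no users; this is a point to state explicitly. Points in $\mathcal{V}\setminus\mathcal{Q}$ satisfy $L(\boldsymbol{s}_b,\boldsymbol{s})=1$, hence $L_{\mathrm{net}}\ge 1$, so their coverage indicator equals $1$.

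Next we combine. Let $a_0=b_0=\sum_{\boldsymbol{s}\in\mathcal{V}\cap\mathcal{S}^*}\Lambda(\boldsymbol{s})$, let $a_i=\sum_{\mathcal{U}_i}\boldsymbol{1}\Lambda$, and let $b_i=\sum_{\mathcal{U}_i}\Lambda$. Then $p_{\mathrm{cov}}=\frac{a_0+\sum_i a_i}{b_0+\sum_i b_i}$. The hypothesis \eqref{eq:local} gives $a_i\ge p_{c_{th}}b_i$ for every $i$ with $b_i>0$, and the case $b_i=0$ is trivial. Also $a_0=b_0\ge p_{c_{th}}b_0$ since $p_{c_{th}}\le 1$. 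Summing these gives $a_0+\sum_i a_i\ge p_{c_{th}}(b_0+\sum_i b_i)$, which is \eqref{subeq:coverage}. If the total intensity is zero, the statement is vacuous.

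The main obstacle is bookkeeping rather than inequality. It consists of two parts. The first is justifying that the discretized ratio faithfully represents \eqref{eq:covdef}, which assumes $L_{\mathrm{net}}$ is constant on each cell $A_i$; we adopt this as part of the discretization. The second is correctly treating obstacle cells with positive intensity. We will handle the latter by noting that the constraint is only sharpened if they are excluded, or by adopting the convention $\Lambda(\boldsymbol{q})=0$ for $\boldsymbol{q}\in\mathcal{Q}$.
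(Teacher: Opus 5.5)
Your proposal is correct and follows essentially the paper's proof: the same disjoint decomposition $\mathcal{S}=\mathcal{S}^\prime\cup\mathcal{V}^*\cup\bigcup_{i}\mathcal{U}_i$ (with $\lambda_u=0$ on $\mathcal{Q}$), the indicator equal to $1$ on visible points, and the final bound $p_{\mathrm{cov}}\ge p_{c_{\mathrm{th}}}+\Pr(\tilde{\boldsymbol{s}}\in\mathcal{V}^*)(1-p_{c_{\mathrm{th}}})$, which is your summed inequality $a_0+\sum_i a_i\ge p_{c_{\mathrm{th}}}(b_0+\sum_i b_i)$ in different form. Of your two ways of handling obstacle cells, only the convention $\Lambda(\boldsymbol{q})=0$ on $\mathcal{Q}$ (the one the paper adopts) closes the argument: an obstacle cell with positive intensity that is not in $\mathcal{V}$ belongs to no $\mathcal{U}_i$, so \eqref{eq:local} says nothing about whether it is covered, and excluding such cells changes the definition of $p_{\mathrm{cov}}$ rather than sharpening the bound.
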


\begin{IEEEproof}
    Refer Appendix \ref{app:lm1}.
\end{IEEEproof}

It is notable that this reduces the global requirement to a more conservative local requirement. Therefore the achieved coverage results, as we'll see later, would be better than what is set as the required minimum. Then, we have a proposition to compute the preliminary candidate sets using Theorem \ref{lm:cov}.
\begin{proposition}
\label{prop:1}
Each Preliminary candidate set $\mathcal{C}^{\prime}_i, 1\le i\le T$ that correspondingly assures $p_{c_{th}}$ percent coverage of $\mathcal{U}_i$ can be computed over a subgraph of $G_{\mathrm{vis}}$ with vertices $\mathcal{U}_i$ using 
%&= \left\{ \boldsymbol{s} \in V_{\mathrm{vis}} \;\middle|\; (\boldsymbol{s}, \boldsymbol{s}_{b}) \in E_{\mathrm{vis}},\frac{\sum_{\boldsymbol{u}_i \in \mathcal{U}_i} \boldsymbol{1}_{\{(\boldsymbol{s}, \boldsymbol{u}_i)\in E_{\mathrm{vis}}\}}f_u(\boldsymbol{u}_i)}{\sum_{\boldsymbol{u}_i\in\mathcal{U}_i}f_u(\boldsymbol{u}_i)} \geq p_{c_{\mathrm{th}}} \right\},\\
\begin{align}
\mathcal{C}^{\prime}_i = \left\{ \boldsymbol{s} \in \mathcal{V} \;\middle|\; \frac{\sum_{\boldsymbol{u} \in \mathcal{U}_i} \boldsymbol{1}_{\{(\boldsymbol{s}, \boldsymbol{u})\in E_{\mathrm{vis}}}\}\Lambda(\boldsymbol{u})}{\sum_{\boldsymbol{u}\in\mathcal{U}_i}\Lambda(\boldsymbol{u})} \geq p_{c_{\mathrm{th}}} \right\}.
\label{eq:candidate_set}
\end{align}
By Theorem \ref{lm:cov}, we have that when $T$ RISs are placed in a location from each of these sets, they collectively guarantee $p_{c_{th}}$ percent coverage for the considered wireless system.
% \begin{equation}
% \mathcal{C}'i = \left\{ \boldsymbol{s} \in \mathcal{V} \;\middle|\; \frac{\sum{\boldsymbol{u} \in \mathcal{U}i} 1{{(\boldsymbol{s},\boldsymbol{u}) \in E_{\mathrm{vis}}}} f_u(\boldsymbol{u})}{\sum_{\boldsymbol{u} \in \mathcal{U}i} f_u(\boldsymbol{u})} \geq p{c_{\mathrm{th}}} \right\}.
% \end{equation}
% where $\boldsymbol{1}_{\{.\}}$ represents the indicator function which takes value $1$ when the enclosed condition is true and $0$ if not.
\end{proposition}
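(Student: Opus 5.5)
The proposition will follow by reducing it to Theorem~\ref{lm:cov}: it suffices to show that placing, for every $i$, one RIS at some $\boldsymbol{s}^{(i)} \in \mathcal{C}^{\prime}_i$ makes the local condition~\eqref{eq:local} hold for every shadowed region $\mathcal{U}_i$. The dual-visibility constraint~\eqref{subeq:BS} comes for free, because $\mathcal{C}^{\prime}_i \subseteq \mathcal{V}$. By the definition of $\mathcal{V}$, we have $L(\boldsymbol{s}_b,\boldsymbol{s}^{(i)})=1$ for every candidate, so each RIS-BS factor in $L_{\mathrm{net}}$ equals one.

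\textbf{Local lower bound.} Fix $i$ and a point $\tilde{\boldsymbol{s}} \in \mathcal{U}_i$. Since $L_{\mathrm{net}}$ is a sum of nonnegative terms, dropping every term except the one for RIS $i$ gives
\[
\boldsymbol{1}_{\{L_{\mathrm{net}}(\tilde{\boldsymbol{s}})\ge 1\}} \ge L(\boldsymbol{s}_b,\boldsymbol{s}^{(i)})\,L(\boldsymbol{s}^{(i)},\tilde{\boldsymbol{s}}) = \boldsymbol{1}_{\{(\boldsymbol{s}^{(i)},\tilde{\boldsymbol{s}})\in E_{\mathrm{vis}}\}}.
\]
Multiplying by $\Lambda(\tilde{\boldsymbol{s}})\ge 0$, summing over $\mathcal{U}_i$, and dividing by $\sum_{\mathcal{U}_i}\Lambda$ shows that the left side of~\eqref{eq:local} is at least the ratio appearing in~\eqref{eq:candidate_set}. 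That ratio is at least $p_{c_{th}}$ because $\boldsymbol{s}^{(i)} \in \mathcal{C}^{\prime}_i$. The denominator is positive: $\mathcal{U}\subseteq\mathcal{S}^*$, so every point of $\mathcal{U}_i$ has positive intensity; empty $\mathcal{U}_i$ are discarded.

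\textbf{Reduction to a subgraph.} The membership test for $\mathcal{C}^{\prime}_i$ uses only edges of $G_{\mathrm{vis}}$ joining $\boldsymbol{s}$ to vertices of $\mathcal{U}_i$, together with the BS-visibility of $\boldsymbol{s}$. Hence $\mathcal{C}^{\prime}_i$ can be computed on the subgraph induced by $\mathcal{U}_i \cup \mathcal{V}$. For each $\boldsymbol{s}\in\mathcal{V}$, one evaluates the $\Lambda$-weighted fraction of the neighbours of $\boldsymbol{s}$ that lie in $\mathcal{U}_i$. I read the statement's ``subgraph with vertices $\mathcal{U}_i$'' as this bipartite restriction.

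\textbf{Global conclusion and main obstacle.} Once~\eqref{eq:local} holds for all $i$, Theorem~\ref{lm:cov} yields~\eqref{subeq:coverage}. Behind this is a mediant-type argument: the total covered weight is at least the weight of the always-covered set $\mathcal{V}$ plus $p_{c_{th}}\sum_i\sum_{\mathcal{U}_i}\Lambda$, using that the $\mathcal{U}_i$ are disjoint. The step to handle carefully is the definitional bookkeeping:
\begin{itemize}
\item $L_{\mathrm{net}}$ must be the multi-RIS quantity evaluated with all $T$ RISs present.
\item Other RISs only add paths, so they can only help; this is the monotonicity used in the local bound.
\item Points of $\mathcal{S}^{\prime}$ and $\mathcal{Q}$ carry no weight in~\eqref{eq:covdef} after discretization.
\end{itemize}
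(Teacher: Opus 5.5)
Your proposal is correct and follows essentially the same route as the paper's proof: since $\mathcal{C}^{\prime}_i\subseteq\mathcal{V}$, an RIS placed there gives every $E_{\mathrm{vis}}$-neighbour in $\mathcal{U}_i$ a BS--RIS--user path, so the ratio in \eqref{eq:candidate_set} lower-bounds the local coverage ratio in \eqref{eq:local}, and Theorem~\ref{lm:cov} then yields \eqref{subeq:coverage}. Your version makes explicit what the paper leaves implicit: the pointwise indicator bound, monotonicity in the other RISs, positivity of the denominators, and reading the ``subgraph'' as the restriction to $\mathcal{V}\cup\mathcal{U}_i$.
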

\begin{IEEEproof}
Refer Appendix \ref{app:prop1}.
\end{IEEEproof}
This set can be manually computed by searching over the set $\mathcal{V}$ by checking for the local coverage requirement. 

\subsection{Minimizing the number of RISs} 
Although when RISs are placed in locations from these preliminary candidate sets, we assure coverage to each shadowed $\mathcal{U}_i$, most of these RISs not only extend coverage to the respective $\mathcal{U}_i$ but also to other shadowed regions $\mathcal{U}_j,~i\ne j$. In this section, we aim to reduce this spatial redundancy in the collection of candidate sets to the lowest possible level, while achieving the same coverage. In other words, having obtained a collection of candidate sets where we can deploy $T$ RISs to guarantee the coverage requirements in the constraints of $\mathbf{P1}$, we now solve $\mathbf{P1}$ while lying within the constraint space. %in each shadowed region. 

To do that, we aim to compute a new, smaller collection of candidate sets ($J<T$ sets) where if an RIS is placed in each of these new sets, they together guarantee same $p_{c_{th}}$ percent coverage in all $T$ shadowed regions $\{\mathcal{U}_i\}_{i=1}^T$ and hence by Theorem \ref{lm:cov} guarantee the global coverage requirement~\ref{subeq:coverage}. 

A hypergraph is a generalization of a graph consisting of hyperedges which can exist between more than two vertices~\cite{hypergraph}. A graph with vertices as sets and edges between them indicating the existence of a non-empty intersection is called an intersection graph~\cite{graphtheory}. We define an intersection hypergraph $G_{\mathrm{int}} = (V_{\mathrm{int}},E_{\mathrm{int}})$ in which each of the $T$ vertices are the preliminary candidate sets,  $V_{\mathrm{int}} = \{C^\prime_1, C^\prime_2, \dots, C^\prime_T\}$ and a hyperedge $(C^\prime_{j_1}, \dots, C^\prime_{j_f})\in E_{\mathrm{int}}$ if $\bigcap_{i=j_1}^{j_f}C^{\prime}_i\ne \emptyset$ for $f\ge1$. We assume $C^{\prime}_i \ne \emptyset, \forall i$. We then have a proposition to find a solution to $\mathbf{P1}$ using the framework constructed.

%after applying the conservative local coverage constraint in Lemma \ref{lm:cov}/Theorem \ref{prop:1},
\begin{proposition}
\label{prop:2}
We propose to partition the vertex set $V_{\mathrm{int}}$ into subsets such that each of these subsets are hyperedges in $G_{\mathrm{int}}$ and the number of subsets is minimized. Thus, the collection of preliminary candidate sets is partitioned such that each subset of the collection has a non-empty intersection and such that the number of partitions is minimized. Let the partition be $\mathcal{P} = \{e_1,\ldots,e_J\}$ such that $e_i\subset V_{\mathrm{int}}, 1\le i \le J$ and $\bigcup_{i=1}^Je_i=V_{\mathrm{int}}$. The goal is to minimize $J=|\mathcal{P}|$ %while ensuring each subset $e_i, 1\le i \le J$ corresponds to a hyperedge in the graph, i.e., has a non-zero intersection between the vertices in the subset. This problem is given by
\begin{align}
    &\min_{\mathcal{P}} |\mathcal{P}|\\
    \text{s.t.}\quad & e_i \in E_{\mathrm{int}}, \forall e_i \in \mathcal{P}.
\end{align}
The corresponding final candidate sets $C_i, 1 \le i \le J$ are
\begin{equation}
    C_i = \bigcap_{j\in e_i}C^{\prime}_j
\end{equation}
\end{proposition}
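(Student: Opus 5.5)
The claim to establish is that Proposition~\ref{prop:2} is correct: placing one RIS in each final candidate set $C_i$ is feasible for $\mathbf{P1}$, and the minimum-cardinality hyperedge partition yields the smallest $J$ that can be certified through the sufficient local conditions of Theorem~\ref{lm:cov}. I would split the argument into a feasibility part and an optimality part.

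\emph{Feasibility.} Take any partition $\mathcal{P}=\{e_1,\ldots,e_J\}$ whose blocks are hyperedges of $G_{\mathrm{int}}$. For each block, $C_i=\bigcap_{j\in e_i}C^{\prime}_j\neq\emptyset$ because $e_i\in E_{\mathrm{int}}$. Pick any $\boldsymbol{s}^{(i)}\in C_i$ and place RIS $i$ there.
\begin{itemize}
\item Since $\boldsymbol{s}^{(i)}\in C^{\prime}_j\subseteq\mathcal{V}$, the BS-visibility constraint~\eqref{subeq:BS} holds.
\item Since $\mathcal{P}$ covers $V_{\mathrm{int}}$, every shadowed region index $j$ lies in some block $e_i$. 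The RIS at $\boldsymbol{s}^{(i)}\in C^{\prime}_j$ satisfies the defining inequality~\eqref{eq:candidate_set} for $\mathcal{U}_j$.
\item $L_{\mathrm{net}}$ is a sum of nonnegative terms, so adding other RISs can only increase the indicator $\boldsymbol{1}_{\{L_{\mathrm{net}}\ge1\}}$. The path through RIS $i$ already gives $L(\boldsymbol{s}_b,\boldsymbol{s}^{(i)})L(\boldsymbol{s}^{(i)},\boldsymbol{u})=1$ for every $\boldsymbol{u}$ with $(\boldsymbol{s}^{(i)},\boldsymbol{u})\in E_{\mathrm{vis}}$.
\item Hence the local condition~\eqref{eq:local} holds for every $\mathcal{U}_j$, and Theorem~\ref{lm:cov} gives the global constraint~\eqref{subeq:coverage}.
\end{itemize}
This reuses the argument behind Proposition~\ref{prop:1}. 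The only new ingredient is that one point can lie in several $C^{\prime}_j$ simultaneously.

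\emph{Optimality, within the certified class.} Consider any placement $\{\boldsymbol{s}^{(1)},\ldots,\boldsymbol{s}^{(J')}\}\subset\mathcal{V}$ in which, for every $j$, some single RIS satisfies the inequality~\eqref{eq:candidate_set} for $\mathcal{U}_j$, i.e.\ lies in $C^{\prime}_j$. I would map it to a partition in three steps.
\begin{enumerate}
\item Assign each $j$ to one such RIS, say the one with the smallest index. This gives a map $\pi:\{1,\ldots,T\}\to\{1,\ldots,J'\}$.
\item The nonempty fibres $\pi^{-1}(i)$ partition $V_{\mathrm{int}}$.
\item Each fibre is a hyperedge, since $\boldsymbol{s}^{(i)}\in\bigcap_{j\in\pi^{-1}(i)}C^{\prime}_j$.
\end{enumerate}
So the number of blocks is at most $J'$, and the minimum partition size is at most $J'$. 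Combined with feasibility, the minimum of the partition problem equals the minimum number of RISs over placements certified in this way. This justifies calling $C_i$ the final candidate sets and $J=|\mathcal{P}|$ the solution to $\mathbf{P1}$.

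\emph{Main obstacle.} The hard point is the gap between this certified class and the true $\mathbf{P1}$. The true $\mathbf{P1}$ allows several RISs to jointly cover a region $\mathcal{U}_j$, with none of them individually in $C^{\prime}_j$. It also only needs the global average coverage, not each local one. I would not claim global optimality. Instead I would state explicitly that the proposition optimizes $J$ over the conservative constraint set induced by Theorem~\ref{lm:cov} and Proposition~\ref{prop:1}, consistent with the paper's remark that the local requirement is more conservative. I would also note that the partition problem is a minimum set cover / clique-partition type problem, which is NP-hard in general. This motivates the greedy procedure and means the later algorithm is an approximation of the exact minimum defined here.
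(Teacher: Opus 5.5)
Your feasibility argument is essentially the paper's proof in Appendix~\ref{app:prop2}, and your proposal is correct. The paper picks a point in each nonempty $C_i=\bigcap_{j\in e_i}C^{\prime}_j\subseteq\mathcal{V}$, invokes Proposition~\ref{prop:1} for every $\mathcal{U}_j$ with $j\in e_i$, uses that the blocks cover $V_{\mathrm{int}}$, and concludes with Theorem~\ref{lm:cov}; it then simply asserts that minimizing $|\mathcal{P}|$ minimizes $J$.

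Your proposal goes beyond this in three ways:
\begin{itemize}
\item You make explicit that $L_{\mathrm{net}}$ is monotone in the set of RISs.
\item Your fibre-map converse supplies the optimality direction the paper leaves implicit: any placement in which each $\mathcal{U}_j$ is served by a single RIS lying in $C^{\prime}_j$ induces a hyperedge partition with at most as many blocks.
\item Your caveat that this optimality holds only over that certified class, not for the true $\mathbf{P1}$, is consistent with the paper's chain $J_{\mathrm{greedy}}\ge J^*_{\mathrm{local}}\ge J^*$. It is in fact slightly sharper, since the minimum partition can exceed $J^*_{\mathrm{local}}$ when several RISs jointly meet~\eqref{eq:local} for one region.
\end{itemize}
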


\begin{IEEEproof}
Refer Appendix \ref{app:prop2}.
\end{IEEEproof}

Taking inspiration from the Greedy Maximal Scheduling (GMS) algorithm from network theory~\cite{GMS}, we propose Algorithm \ref{algo:greedy} to estimate this partition. We pick the hyperedge that has the most number of vertices i.e., non-empty intersection between largest number of preliminary candidate sets. We compute this intersection and set it as the first final candidate set for RIS placement. When we place an RIS in this region, we simultaneously assure $p_{c_{th}}$ percent coverage to all shadowed regions corresponding to the preliminary candidate sets in the selected hyperedge. We remove all vertices in this hyperedge from $G_{\mathrm{int}}$ and repeat the aforementioned steps until we have a graph with no vertices. Notably, while defining, we allowed singleton hyperedges in $G_{\mathrm{int}}$. Thus, if a $C^\prime_i$ has no non-empty intersection with any other set, then we simply add this set to the final collection of sets. Finally, we end up with a smaller number of final candidate sets that achieve $p_{c_{th}}$ percent coverage to $\{\mathcal{U}_i\}_{i=1}^T$. We denote the resulting number of final candidate sets (i.e., RISs) by $J_{\mathrm{greedy}}$.  Recall from Theorem \ref{lm:cov} that we adopted a conservative local constraint to compute $G_{\mathrm{int}}$, i.e., $J^*_{\mathrm{local}} \ge J^*$ where $J^*_{\mathrm{local}}$ denotes the optimal number of RISs needed to satisfy the modified constraint~\eqref{eq:local}. Also, since the described problem is a greedy heuristic to approach this $J^*_{\mathrm{local}}$, we have $J_{\mathrm{greedy}}\ge J^*_{\mathrm{local}}$. Thus, we get $J_{\mathrm{greedy}}\ge J^*_{\mathrm{local}}\ge J^*$. In summary, we propose a method similar to greedy set cover, greedy coloring, GMS, and use $J_{\mathrm{greedy}}$ as an approximation of $J^*$ in $\mathbf{P1}$.
\begin{algorithm}[t!]
\caption{Greedy maximal intersection partitioning}\label{algo:greedy}
\textbf{Input:}
\begin{enumerate}
    \item[(i)] The intersection hypergraph, $G_{\mathrm{int}}$,
\end{enumerate}
\textbf{Output:} 
\begin{enumerate}
    \item[(i)] Number of RISs $J_{\mathrm{greedy}}$,
    \item[(ii)] Corresponding candidate sets: $\{\mathcal{C}_i\}$, $1\le i \le J$
\end{enumerate} 
\begin{algorithmic}[1]
    \State Set $i = 0$
    \While{$G_{\mathrm{int}}$ has at least one vertex} 
        \State Set $i = i + 1$
         
        \State Pick hyperedge $e_{\mathrm{max}}$ in $G_{\mathrm{int}}$ with most vertices
        \State Compute the candidate set $C_i = \bigcap_{j\in e_{\mathrm{max}}} C^{\prime}_j$
        \State Remove the vertices $j\in e_{\mathrm{max}}$ from $G_{\mathrm{int}}$
        % \Else \State Set the final candidate set $C_i =  C^{\prime}_i$
        %\EndIf
    \EndWhile
    \State Set $J_{\mathrm{greedy}} = i$
\end{algorithmic}
\end{algorithm}
\section{Expected Sum Rate Maximization and the Overall Algorithm}
\label{sec:perf}

In this section, firstly, we discuss a joint beamforming technique to maximize system performance in $\mathbf{P0}$. Then, we discuss $\mathbf{P2} $ to finalize optimal RIS placement.

\subsection{Imperfect CSI, Channel Estimation, and RIS Training}
\label{subsec:imperfectcsi}

To implement the beamforming in real-time, CSI must be acquired. We propose to use a low-complexity estimation and RIS training scheme for the multi-RIS network.

\subsubsection{Direct Link Estimation}
With RISs in ``zero state'', let the BS transmit $M$ orthogonal pilots, $\boldsymbol{X}_p = \mathbf{I}_M$. User $k$ receives
\begin{equation}
\boldsymbol{y}_{d,k} = \sqrt{\frac{P_T}{M}}\boldsymbol{h}_{d,k}^{\mathrm{H}} + \boldsymbol{z}_{d,k},
\end{equation}
where $\boldsymbol{z}_{d,k} \sim \mathcal{C}\mathcal{N}(\mathbf{0},\sigma^2\mathbf{I}_M)$. We assume an analog feedback~\cite{channelestigen} of a scaled version of the observed symbol
\begin{align}
%  \hat{\boldsymbol{h}}_{{d,k}
% _{\text{DL}}} &= \frac{\sqrt{\frac{P_T}{M}}}{\frac{P_T}{M} + \sigma^2}\boldsymbol{y}^{\mathrm{H}}_{d,k},\\
\tilde{\boldsymbol{y}}_{d,k} &= \frac{\sqrt{\beta_{\mathrm{fb}}\frac{P_T}{M}}}{\sqrt{\beta_{d,k}\frac{P_T}{M}+\sigma^2}} \boldsymbol{y}_{d,k} +\tilde{\boldsymbol{z}}_{d,k}\\ 
&= \frac{\sqrt{\beta_{\mathrm{fb}}}\frac{P_T}{M}}{\sqrt{\beta_{d,k}\frac{P_T}{M}+\sigma^2}} \boldsymbol{h}^{\mathrm{H}}_{d,k} + \left(\frac{\sqrt{\beta_{\mathrm{fb}}\frac{P_T}{M}}}{\sqrt{\beta_{d,k}\frac{P_T}{M}+\sigma^2}}\boldsymbol{z}_{d,k}+\tilde{\boldsymbol{z}}_{d,k}\right)\\
\label{eq:directanalog}
&= \frac{\sqrt{\beta_{\mathrm{fb}}}\frac{P_T}{M}}{\sqrt{\beta_{d,k}\frac{P_T}{M}+\sigma^2}} \boldsymbol{h}^{\mathrm{H}}_{d,k} +\bar{\boldsymbol{z}}_{d,k},
\end{align}
where $\beta_{\mathrm{fb}}$ is the feedback gain and $\tilde{\boldsymbol{z}}_{d,k} \sim \mathcal{C}\mathcal{N}(\mathbf{0},\sigma^2\mathbf{I}_M)$ the feedback noise. For later convenience, let $A_k = \left(\sqrt{\beta_{\mathrm{fb}}}\frac{P_T}{M}\right)/\left(\sqrt{\beta_{d,k}\frac{P_T}{M}+\sigma^2}\right)$. Since $\boldsymbol{z}_{d,k}$ and $\tilde{\boldsymbol{z}}_{d,k}$ are both Gaussian, $\bar{\boldsymbol{z}}_{d,k}$ is too where each 0has variance
\begin{equation}
    \sigma_{\bar{Z},k}^2 = \sigma^2 \left( 1 + \frac{\beta_{\mathrm{fb}}\frac{P_T}{M}}{\beta_{d,k}\frac{P_T}{M}+\sigma^2} \right).
\end{equation}
Thus, the uplink MMSE estimate is\cite{channelestigen}% of the $k^{\mathrm{th}}$direct link is given by 
% Following standard minimum mean square error (MMSE) estimation and analog feedback over the uplink \cite{channelestigen}, the BS obtains the MMSE estimate,_{\text{UL}}
\begin{align}
\hat{\boldsymbol{h}}_{d,k} &= \frac{\sqrt{\beta_{\mathrm{fb}}}\frac{P_T}{M}\beta_{d,k}}{\left(\sqrt{\frac{P_T}{M}\beta_{d,k}+\sigma^2}\right)\left(\beta_{\mathrm{fb}}\frac{P_T}{M}+\sigma^2\right)}\tilde{\boldsymbol{y}}^{\mathrm{H}}_{d,k}.
    \label{eq:directest}
\end{align} %^{\mathrm{H}}
The estimation error $\boldsymbol{e}_{d,k} = \boldsymbol{h}_{{d,k}}- \hat{\boldsymbol{h}}_{d,k}$ can be shown to be Gaussian and independent of the estimate with the variance
\begin{equation}%\frac{1}{1 + \beta_{\mathrm{fb}}\frac{P_T}{M\sigma^2}} +
    \sigma_{e,k}^2 =  \beta_{d,k}\frac{1+\beta_{d,k}\frac{P_T}{M\sigma^2}+\beta_{\mathrm{fb}}\frac{P_T}{M\sigma^2}}{\left(1 + \beta_{\mathrm{fb}}\frac{P_T}{M\sigma^2}\right)\left(1 + \beta_{d,k}\frac{P_T}{M\sigma^2}\right)}.
\end{equation}
We assume FDD with orthogonal feedback resources available, enabling simultaneous transmission and reception for all users.

%where $\tilde{\boldsymbol{y}}_{d,k}$ is the feedback signal corrupted by AWGN $\tilde{\boldsymbol{z}}_{d,k}$. 

\subsubsection{Indirect Link Estimation} Let $\beta_{\mathrm{in}}=\beta_{b,j}\beta_{j,k}$ be the net path loss. The expected value of $\boldsymbol{H}_{j,k}$ is
\begin{equation}
\bar{\boldsymbol{H}}_{j,k}=\mathbb{E}[\boldsymbol{H}_{j,k}]=\sqrt{\frac{T_1T_2}{(1+T_1)(1+T_2)}}\sqrt{\beta_{\mathrm{in}}}\mathrm{diag}(\boldsymbol{\bar{h}}^{\mathrm{H}}_{j,k})\bar{\boldsymbol{H}}_{b,j}.    
\end{equation}
%which is fully determined by  path loss and angular parameters.
To estimate $\boldsymbol{H}_{j,k}$, RIS $j$ is activated while others are turned off. A unitary Discrete Fourier Transform (DFT) matrix $\boldsymbol{V} = [\boldsymbol{\theta}_{j,1}, \dots, \boldsymbol{\theta}_{j,N}]^T\in\mathbb{C}^{N\times N}$ with $\boldsymbol{\theta}_{j,1}=\boldsymbol{1}^T_N$ is used for training \cite{channelesti2,ristraining1,ristraining2}. In each of the $N$ iterations, $\boldsymbol{\theta}_{j}$ is set to a row of $\boldsymbol{V}$. The signal at user $k$ through RIS $j$ for $\boldsymbol{X}_p = \mathbf{I}_M$ is %is
\begin{align}
\boldsymbol{y}^{(n)}_{j,k} &= \sqrt{\frac{P_T}{M}}(\boldsymbol{h}_{d,k}^{\mathrm{H}} + \boldsymbol{\theta}^T_{j,n}\boldsymbol{H}_{j,k}) + \boldsymbol{z}^{(n)}_{j,k}, \quad 1\le n \le N.
\end{align}
Then, we adopt a similar analog feedback procedure like~\eqref{eq:directanalog}. Additionally, BS isolates the indirect link by subtracting a scaled version of the estimated direct link, $\bar{\boldsymbol{y}}^{(n)}_{j,k} = \tilde{\boldsymbol{y}}^{(n)}_{j,k} - A_k\hat{\boldsymbol{h}}^{\mathrm{H}}_{d,k}$. Concatenating the $N$ entries as rows yield $\bar{\boldsymbol{Y}}_{j,k} = A_k \boldsymbol{V}\boldsymbol{H}_{j,k} +  A_k\boldsymbol{1}_N \boldsymbol{e}_{d,k}^{\mathrm{H}} + \bar{\boldsymbol{Z}}_{j,k}$. 
Using unitary property $\boldsymbol{V}^{\mathrm{H}}\boldsymbol{V}=N\mathbf{I}_N$,   we compute $\boldsymbol{Y}^\prime_{j,k}=\boldsymbol{V}^{\mathrm{H}}\bar{\boldsymbol{Y}}_{j,k}/N$. Hereafter, we denote row $n$ of any enclosed matrix using $(.)^{(n)}$. Using $\boldsymbol{V}^{\mathrm{H}}\boldsymbol{1}_N = [N,0,\ldots,0]^{\mathrm{T}} \in \mathbb{R}^{N\times1}$ we get row $n$ of $\boldsymbol{Y}^\prime_{j,k}$ as 
\begin{align}
\boldsymbol{Y}_{j,k}^{\prime(n)}
&=
A_k\boldsymbol{H}_{j,k}^{(n)}
+
\boldsymbol{1}_{\{n=1\}}A_k\boldsymbol{e}_{d,k}^{\mathrm{H}}
+
\left(\frac{\boldsymbol{V}^{\mathrm{H}}}{N}
\bar{\boldsymbol{Z}}_{j,k}\right)^{(n)}.
\label{eq:lmmseobs}
\end{align}
for any $n\in\{1,2,\dots,N\}$. This is observed sample that we use to estimate the $n^{\mathrm{th}}$ row of the indirect channel. Its mean is $\mathbb{E}[\boldsymbol{Y}_{j,k}^{\prime(n)}]
=
A_k\bar{\boldsymbol{H}}_{j,k}^{(n)}$. The covariance terms are given below
\begin{align}
&\boldsymbol{R}_{H_{j,k}^{(n)}H_{j,k}^{(n)}}
=
\mathbb{E}\left[
(\boldsymbol{H}_{j,k}^{(n)}-\bar{\boldsymbol{H}}_{j,k}^{(n)})^{\mathrm{H}}
(\boldsymbol{H}_{j,k}^{(n)}-\bar{\boldsymbol{H}}_{j,k}^{(n)})
\right]
\nonumber\\
&=
\frac{\beta_{\mathrm{in}}}{(1+T_1)(1+T_2)}
\left[
(1+T_2)\boldsymbol{I}_M
+
\frac{T_1}{N}
\bar{\boldsymbol{H}}_{b,j}^{\mathrm{H}}
\bar{\boldsymbol{H}}_{b,j}
\right],
\label{eq:rescov}
\end{align}
\begin{align}
&\boldsymbol{R}_{Y^{\prime(n)}_{j,k}H^{(n)}_{j,k}}
=
\mathbb{E}\left[
\left(
\boldsymbol{Y}_{j,k}^{\prime(n)}
-\mathbb{E}[\boldsymbol{Y}_{j,k}^{\prime(n)}]
\right)^{\mathrm{H}}
\left(
\boldsymbol{H}_{j,k}^{(n)}
-\bar{\boldsymbol{H}}_{j,k}^{(n)}
\right)
\right],
\nonumber\\
&=
A_k\boldsymbol{R}_{H^{(n)}_{j,k}H^{(n)}_{j,k}},
\end{align}
\begin{align}
&\boldsymbol{R}_{Y^{\prime(n)}_{j,k}Y^{\prime(n)}_{j,k}}
=
\mathbb{E}\left[
\left(
\boldsymbol{Y}_{j,k}^{\prime(n)}
-\mathbb{E}[\boldsymbol{Y}_{j,k}^{\prime(n)}]
\right)^{\mathrm{H}}
\left(
\boldsymbol{Y}_{j,k}^{\prime(n)}
-\mathbb{E}[\boldsymbol{Y}_{j,k}^{\prime(n)}]
\right)
\right]
\nonumber\\
&=
A_k^2\boldsymbol{R}_{H^{(n)}_{j,k}H^{(n)}_{j,k}}
+
\left(
\boldsymbol{1}_{\{n=1\}}A_k^2\sigma_{e,k}^2
+
\frac{\sigma_{\bar{Z},k}^2}{N}
\right)\boldsymbol{I}_M .
\end{align}
Thus, the LMMSE estimate of the $n^{\mathrm{th}}$ row of
$\boldsymbol{H}_{j,k}$ is
\begin{align}
\hat{\boldsymbol{H}}_{j,k}^{(n)}
=
\bar{\boldsymbol{H}}_{j,k}^{(n)}
+
\left(
\boldsymbol{Y}_{j,k}^{\prime(n)}
-
A_k\bar{\boldsymbol{H}}_{j,k}^{(n)}
\right)
\boldsymbol{R}^{-1}_{Y^{\prime(n)}_{j,k}Y^{\prime(n)}_{j,k}}
\boldsymbol{R}_{Y^{\prime(n)}_{j,k}H^{(n)}_{j,k}} .
\label{eq:indirectest}
\end{align}
This process is repeated sequentially for all $J$ RISs, providing the full end-to-end estimated channel $\boldsymbol{h}_k = \hat{\boldsymbol{h}}_k + \boldsymbol{e}_k$.

\subsection{Joint Beamforming}
\label{subsec:jointbeam}

Once we have estimated CSI, we can solve $\mathbf{P0}$ in real-time. To begin with, we use the Lagrangian dual transform \cite{fp2}, to reduce $\mathbf{P0}$ into a sum-of-ratios problem by introducing a set of auxiliary variables $\boldsymbol{\alpha} = \{\alpha_1,\dots,\alpha_K\}$ 

\begin{subequations}
\begin{align}
\mathbf{P0(a)}:\quad\max_{\boldsymbol{\alpha},\boldsymbol{W},\boldsymbol{\Theta}}&  \sum_{k=1}^K\Big[ \log(1+\alpha_k) - \alpha_k + \frac{(1+\alpha_k)\gamma_k}{1+\gamma_k}\Big]\nonumber \\
s.t. \quad &\alpha_k\ge0, \quad \forall k,\\
& \left|\theta_j^{(n)}\right| = 1, \quad \forall \ j, n,\\
&\sum_{k=1}^{K}\|\boldsymbol{w}_k\|_2^2 \leq P_T.
\end{align}
\end{subequations}

% \begin{equation}
% \max_{\boldsymbol{\alpha}, \boldsymbol{\Theta}, \boldsymbol{W}} \sum_{k=1}^{K} \log(1 + \alpha_k) - \alpha_k + \frac{(1 + \alpha_k)\gamma_k}{1 + \gamma_k}
% \end{equation}

% \begin{itemize}
%     \item[$\circ$] $s.t. \quad |\theta_g^{(n)}| = 1, \ \forall \ 1 \le g \le G, \ 1 \le n \le N$
%     \item[$\circ$] $\sum_{k=1}^{K} |\boldsymbol{w}_k|^2 \le P_T$
% \end{itemize}

% \begin{equation}
% \begin{aligned}
%     \frac{\partial f}{\partial \alpha_k} & \Rightarrow \frac{1}{1 + \alpha_k} - 1 + \frac{\gamma_k}{1 + \gamma_k} = 0 \\
%     & \Rightarrow \frac{1 + \gamma_k - (1 + \alpha_k)(1 + \gamma_k) + \gamma_k(1 + \alpha_k)}{(1 + \alpha_k)(1 + \gamma_k)} = 0 \\
%     & \Rightarrow \frac{1 + \gamma_k - 1 - \alpha_k\gamma_k - \alpha_k - \gamma_k + \gamma_k + \alpha_k\gamma_k}{(1 + \alpha_k)(1 + \gamma_k)} = 0 \\
%     & \Rightarrow \alpha_k^{opt} = \gamma_k
% \end{aligned}
% \end{equation}

Clearly, the problem is equivalent to $\mathbf{P0}$ as long as $\alpha_k = \gamma_k$. Then, the quadratic transform \cite{fp1} is applied to decouple the ratios (SINR terms) in $\mathbf{P0(a)}$ by introducing another set of auxiliary variables $\boldsymbol{\beta} = \{\beta_1,\dots,\beta_K\}$
\begin{subequations}
\begin{align}
\mathbf{P0(b)}:\quad\max_{\boldsymbol{\alpha},\boldsymbol{\beta},\mathbf{W},\boldsymbol{\Theta}}&  \sum_{k=1}^K\Big[ (\log(1+\alpha_k) - \alpha_k) \nonumber\\
+ & 2\sqrt{(1+\alpha_k)} \mathcal{R}e\Big\{ \beta_k^*\hat{\boldsymbol{h}}_k\boldsymbol{w}_k \Big\} \nonumber\\
- &|\beta_k|^2 \Big( \sum_{i=1}^K |\hat{\boldsymbol{h}}_k\boldsymbol{w}_i|^2 + \sigma^2 \Big)\Big]\nonumber \\
s.t. \quad &\alpha_k\ge0, \quad \forall k,\\
&|\theta^{(n)}_j|=1, \forall j,n,\\
&\sum_{k=1}^{K}\|\boldsymbol{w}_k\|_2^2 \leq P_T.
\end{align}
\end{subequations}
\begin{table*}[h!]
\caption{Computational Complexity Analysis of the Proposed RIS Placement.}
\label{table_complexity}
\centering
% 1. Increase vertical padding (1.5 is usually the "sweet spot" for math)
\renewcommand{\arraystretch}{1.5} 
% 2. Increase horizontal padding between columns
\setlength{\tabcolsep}{10pt} 

\begin{tabular}{|c|l|c|}
\hline
\textbf{Phase} & \textbf{Step} & \textbf{Complexity} \\
\hline
\multirow{5}{*}{\textbf{Deployment}} 
& Discretization & $\mathcal{O}(|\mathcal{S}|)$ \\ \cline{2-3}
& DBSCAN & $\mathcal{O}(|\mathcal{Q}|\log|\mathcal{Q}|)$  \\ \cline{2-3}
& Construct $G_{\mathrm{vis}},\mathcal{V}$ + Proposition \ref{prop:1} & $\mathcal{O}(|(\mathcal{S}^*\cup\mathcal{V})\setminus\mathcal{Q}||\mathcal{S}^*\setminus\mathcal{Q}| + |\mathcal{V}||\mathcal{U}|) \approx\mathcal{O}(|(\mathcal{S}^*\cup\mathcal{V})\setminus\mathcal{Q}||\mathcal{S}^*\setminus\mathcal{Q}|)$ \\ \cline{2-3}
& Construct $G_{\mathrm{int}}$ + Algorithm~\ref{algo:greedy} & $\mathcal{O}(T(2^T + T|\mathcal{C}^{\prime}_{\max}|)) $ \\ \cline{2-3}
& BO-based RIS Placement & $\mathcal{O}(J \cdot I_{BO}(C_{CE} + C_{JB}))$ \\
\hline
\multirow{2}{*}{\textbf{Operation}} 
& Channel Estimation & $ \mathcal{O}(C_{CE} = J K NM\log N )$ \\ \cline{2-3}
& Joint Beamforming & $\mathcal{O}(C_{JB} = I_{JB} \cdot K(M^2 + JN^2\min\{M,K\}))$ \\
\hline
\multicolumn{2}{|c|}{\textbf{Deployment Overall}} & $\mathcal{O}(|(\mathcal{S}^*\cup\mathcal{V})\setminus\mathcal{Q}||\mathcal{S}^*\setminus\mathcal{Q}| + 2J \cdot I_{BO}(C_{CE} + C_{JB}))$ \\
\hline
\multicolumn{2}{|c|}{\textbf{Operation Overall}} & $\mathcal{O}(C_{CE} + C_{JB})$ \\
\hline
\end{tabular}
\end{table*}

In order to solve $\mathbf{P0(b)}$, we adopt an alternating optimization (AO) strategy. The auxiliary variables are updated with optimal values by setting their partial derivatives to zero \cite{beam2} 
\begin{align}
\alpha_k &= \gamma_k,\label{eq:a}\\
\beta_k &= \frac{\sqrt{ (1+\alpha_k)}\,\hat{\boldsymbol{h}}_k \boldsymbol{w}_k}{\sum_{i=1}^K |\hat{\boldsymbol{h}}_k \boldsymbol{w}_i|^2 + \sigma^2}.
\label{eq:b}
\end{align}
% where
% \begin{equation}
% \beta_k^{\text{opt}} = \frac{\sqrt{(1 + \alpha_k)} \boldsymbol{h}_k^{\mathrm{H}} \boldsymbol{w}_k}{\sigma^2 + \sum_{i=1}^{K} |\boldsymbol{h}_k^{\mathrm{H}} \boldsymbol{w}_i|^2}
% \end{equation}
Looking at the well-studied optimal transmit beamforming problem with constant phases, one can derive the update rule for $\mathbf{W}$ to be the Weighted Minimum Mean Square Error (WMMSE) solution which is given by
\begin{align}
\boldsymbol{w}_k &=  \sqrt{(1 + \alpha_k)} \beta_k \Big(\kappa^{\text{opt}} \boldsymbol{I}_M +\sum_{i=1}^K |\beta_i|^2 \hat{\boldsymbol{h}}_i \hat{\boldsymbol{h}}_i^{\mathrm{H}} \Big)^{-1} \hat{\boldsymbol{h}}_k,\label{eq:W}\\
% Equation 2
\kappa^{\text{opt}} &= \min \left\{ \kappa^{\text{opt}} \geq 0 \mid \sum_{i=1}^K \|\boldsymbol{w}_k\|_2^2 \le P_T \right\},
\label{eq:lambda}
\end{align}
where $\kappa^{\text{opt}}$ is obtained by solving~\eqref{eq:lambda} using a bisection search over a finite interval from $0$. Finally, retaining the phase terms in  $\mathbf{P0(b)}$ and grouping quadratic and linear terms \cite{loc2}, 
\begin{align}
\mathbf{P0(c):} \quad    &\min_{\boldsymbol{\theta}_j} f_3(\boldsymbol{\Theta}) = \sum_{j=1}^J(\boldsymbol{\theta}_j^{\mathrm{H}}\boldsymbol{U}_j\boldsymbol{\theta}_j - 2 \mathcal{R}e\{\boldsymbol{v}_j^{\mathrm{H}}\boldsymbol{\theta}_j\})\\
    \text{where }\quad &\boldsymbol{U}_j = \sum_{k=1}^{K} |\beta_k|^2 \sum_{i=1}^{K} \hat{\boldsymbol{H}}_{j,k} \boldsymbol{w}_i \boldsymbol{w}_i^{\mathrm{H}} \hat{\boldsymbol{H}}_{j,k}^{\mathrm{H}},\\
    \boldsymbol{v}_j &= \sum_{k=1}^{K} \sqrt{(1 + \alpha_k)} \beta_k^* (\hat{\boldsymbol{H}}_{j,k}\boldsymbol{w}_k)\nonumber \\
&- |\beta_k|^2 \sum_{i=1}^{K} ((\hat{\boldsymbol{h}}_{d,k}^{\mathrm{H}} + \sum_{\substack{\ell=1\\ \ell\ne j}}^{J} \boldsymbol{\theta}_\ell^T \hat{\boldsymbol{H}}_{\ell,k})  \boldsymbol{w}_i)^* \hat{\boldsymbol{H}}_{j,k} \boldsymbol{w}_i.
\end{align}
% \begin{equation}
% 
% 1\boldsymbol{U}_g = \sum_{k=1}^{K} |\beta_k|^2 \sum_{i=1}^{K} \boldsymbol{H}_{g,k} \boldsymbol{w}_i \boldsymbol{w}_i^{\mathrm{H}} \boldsymbol{H}_{g,k}^{\mathrm{H}}
% \end{equation}
% \begin{align}
% \boldsymbol{v}_g &= \sum_{k=1}^{K} \sqrt{(1 + \alpha_k)} \beta_k^* (\boldsymbol{H}_{g,k}\boldsymbol{w}_k)\nonumber \\
% &- |\beta_k|^2 \sum_{i=1}^{K} ((\boldsymbol{h}_{d,k}^{\mathrm{H}} + \sum_{j\ne g} \boldsymbol{\theta}_g^{\mathrm{H}} \boldsymbol{H}_{g,k})  \boldsymbol{w}_i)^* \boldsymbol{H}_{g,k} \boldsymbol{w}_i
% \end{align}

Hence, to optimize phases with constant transmit beamforming, we adopt the gradient projection algorithm 
\begin{equation}
\boldsymbol{\theta}_j^{(t+1)} = \left( \boldsymbol{\theta}_j^{(t)} - \rho \nabla f_3 \left( \boldsymbol{\theta}_j^{(t)} \right) \right)^+
\label{eq:phase}
\end{equation}
where $(.)^+$ denotes the projection of the enclosed set of phases into the region $|\theta^{(n)}_j| =1, \forall j,n$. Also, $\rho$ is dynamically updated in each iteration for faster convergence. Since $\boldsymbol{v}_j$ has cross terms $\{\boldsymbol\theta_\ell\}_{\ell \neq j}$, we adopt a sequential block coordinate descent -- at each outer iteration, RISs are updated in order $j = 1,\dots,J$, with $\boldsymbol{v}_j$ using the recently available $\{\boldsymbol\theta_\ell\}_{\ell \neq j}$.

We alternatively optimize $(\boldsymbol{\alpha},\boldsymbol{\beta},\mathbf{W},\boldsymbol{\Theta})$ using~\eqref{eq:a},~\eqref{eq:b},~\eqref{eq:W},~\eqref{eq:phase} iteratively to solve the subproblem.

\begin{remark}
Although in Sections \ref{subsec:imperfectcsi} and \ref{subsec:jointbeam}, we have listed a variation of the most well-accepted channel estimation, RIS training, and joint beamforming techniques~\cite{channelestigen, weightedsumrate}, owing to the separable structure of the problem formulation $\mathbf{P0}-\mathbf{P2}$, the algorithm remains effective for any proposed or future channel estimation, RIS training techniques~\cite{channelesti1, channelestisurvey1, channelestisurvey2, channelesti2, channelestimulti} and any joint beamforming technique~\cite{loc2, beam2,beam0}. %the proposed approach remains \emph{effective} for any other or future CSI acquisition, RIS training, and joint beamforming technique.   
\end{remark}%, beam4, beam5

\subsection{User Distribution driven multi-dimensional BO for Optimal RIS placement}

Finally, to solve problem $\mathbf{P2}$, we consider joint optimization of placement of $J$ RISs to maximize expected sum rate. The definition of optimal rate, $R^*$
defines that the RIS coefficients and transmit beamformers are to be obtained by solving $\mathbf{P0}$. Since closed-form solutions for $\boldsymbol{W}$ and $\boldsymbol{\Theta}$ as functions of the channel is intractable, obtaining an exact solution to $\mathbf{P2}$ is practically impossible. Therefore, we are instead required to approximate the expectation over multiple sets of instantiations of users from the user distribution and for each of those instantiations, over multiple instantiations of the Rayleigh/Rician channels to approximate the two-layer expectation. 

There are multiple possibilities like Monte Carlo or recursive clustering~\cite{conf}, etc. Due to the high computational cost of expected sum rate (joint beamforming and averaging over user distribution and random channels) and the non-convex nature of the search space, we employ a $2J$-dimensional BO~\cite{bayesopt} for jointly searching for the best position for each RIS $i$, for $1\le i\le J$ from each candidate set $C_i$. In the BO, we assume a Gaussian surrogate to approximate the objective function and use the expected improvement (EI) acquisition function to balance exploitation and exploration. The optimization is performed for a large number of user instantiations and channel realizations. $J$ is obtained from the method listed in Section \ref{sec:ss}, satisfying~\eqref{subeq:J}. We use joint beamforming for calculating the optimal phases and beamformers  to calculate $R^*$ in the BO. All locations in each candidate set $C_i$ directly satisfy~\eqref{subeq:coverage2} and~\eqref{subeq:BS2} by construction. 

% We believe that this idea of combining user data-driven learning with conventional beamforming to estimate and hence maximize the expected post-deployment throughput of the system to intelligently deploy is a practical and novel one.

\subsection{Complexity Analysis}
\label{subsec:complexity}
A detailed summary of the time complexity of each step of the proposed method is listed in Table \ref{table_complexity}. DBSCAN as in~\cite{dbscan} and in the implementation available across programming languages has complexity $\mathcal{O}(|\mathcal{Q}|\log(|\mathcal{Q}|))$. The construction of $G_{\mathrm{vis}}$ and $\mathcal{V}$ is $\mathcal{O}(|(\mathcal{S}^*\cup\mathcal{V})\setminus\mathcal{Q}||\mathcal{S}^*\setminus\mathcal{Q}|)$ which is the costliest operation before BO. Then, construction of preliminary candidate sets is $\mathcal{O}(|\mathcal{V}||\mathcal{U}|)$. The redundancy reduction and final candidate set construction is $\mathcal{O}(T 2^T +T^2|\mathcal{C}^{\prime}_{\mathrm{max}}|)$ where $\mathcal{C}^{\prime}_{\mathrm{max}}$ is the largest preliminary candidate set. Though BO has multiple steps, sum rate evaluation clearly outweighs all others as a function of the complexity of the chosen channel estimation $C_{CE}$ and joint beamforming technique $C_{JB}$. We provide expansions for the listed algorithms. Channel estimation is dominated by the $JK$ indirect links. The Fast Fourier Transform (FFT) implementation of multiplication of DFT matrices reduces complexity from $\mathcal{O}(N^2 M)$ to $\mathcal{O}(MN\log(N))$. We also use $\text{Rank}=1$ nature of the steering vector term of  $\boldsymbol{R}_{H^{(n)}_{j,k}H^{(n)}_{j,k}}$ to avoid $\mathcal{O}(M^3)$.  Efficient implementation and complexity discussion for the joint beamforming algorithm is in \cite{conf}. Thus, the overall complexity of the proposed one-time RIS deployment is dominated by the construction of $V_{\mathrm{int}}$ and joint beamforming. Let $I_{BO}$ and $I_{JB}$ respectively indicate the total number of BO and joint beamforming iterations. 
\begin{remark}
Number of iterations required to obtain a good placement using BO is much lesser than Monte Carlo or recursive clustering \cite{conf}. Choosing to form `large obstacles' instead of looking at each obstacle ($I^2$ to $T^2$) and considering only those grid points that aren't obstacles, aren't already covered, and have a positive user intensity in $\mathcal{U}=\mathcal{S}^*\setminus \mathcal{Q} \setminus \mathcal{V}$ reflects a clear reduction in complexity. 
\end{remark}

\begin{figure*}
   %---------- ROW 1 ----------
    \begin{subfigure}[b]{0.49\columnwidth}
        \includegraphics[width=\linewidth]{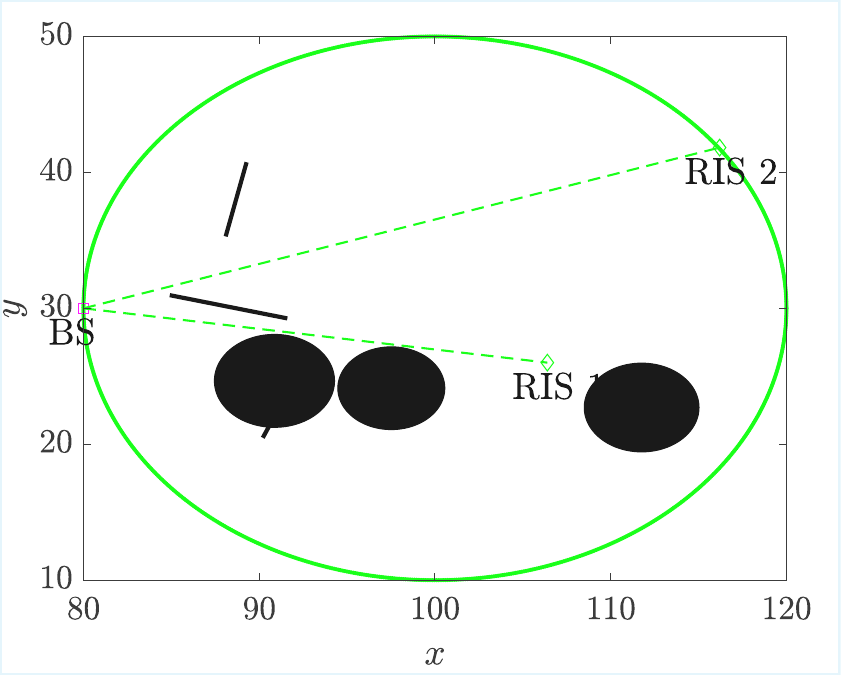}
        \caption{Scenario 1.}
        \label{fig:sc1}
    \end{subfigure}
    \hfill
     % Pushes the next subfigure to the right
    \begin{subfigure}[b]{0.49\columnwidth}
        \includegraphics[width=\linewidth]{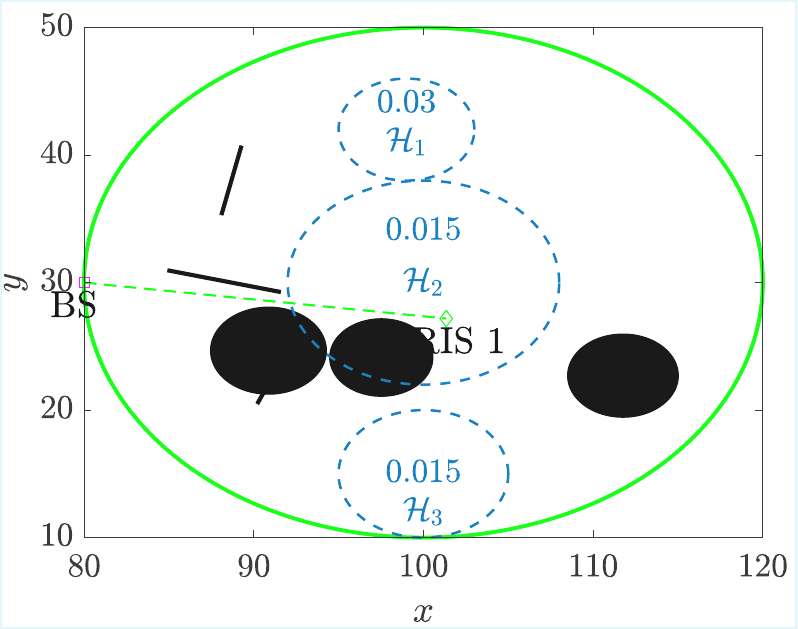}
        \caption{Scenario 2.}
        \label{fig:sc2}
        \end{subfigure}
\hfill
    %---------- ROW 2 ----------
    \begin{subfigure}[b]{0.49\columnwidth}
        \includegraphics[width=\linewidth]{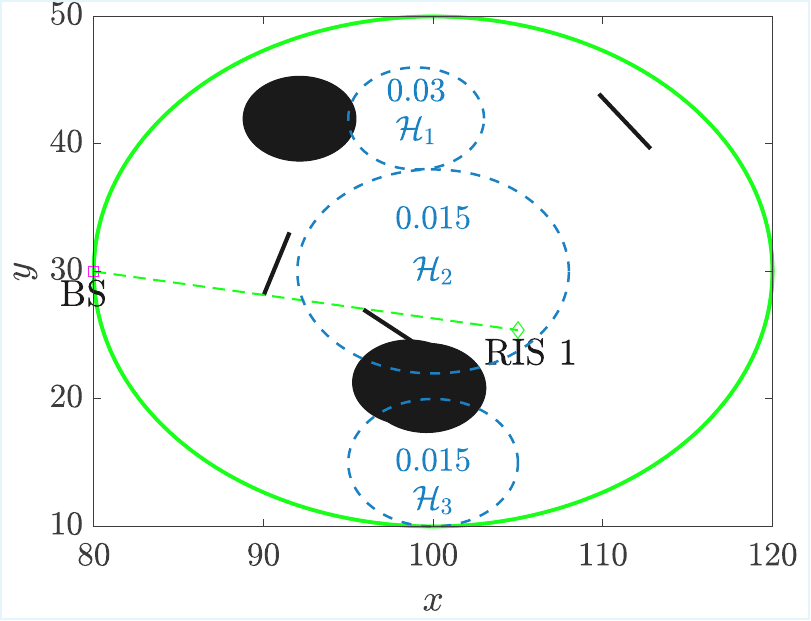}
        \caption{Scenario 3.}
        \label{fig:sc3}
    \end{subfigure}
    \hfill % Pushes the next subfigure to the right
    \begin{subfigure}[b]{0.49\columnwidth}
        \includegraphics[width=\linewidth]{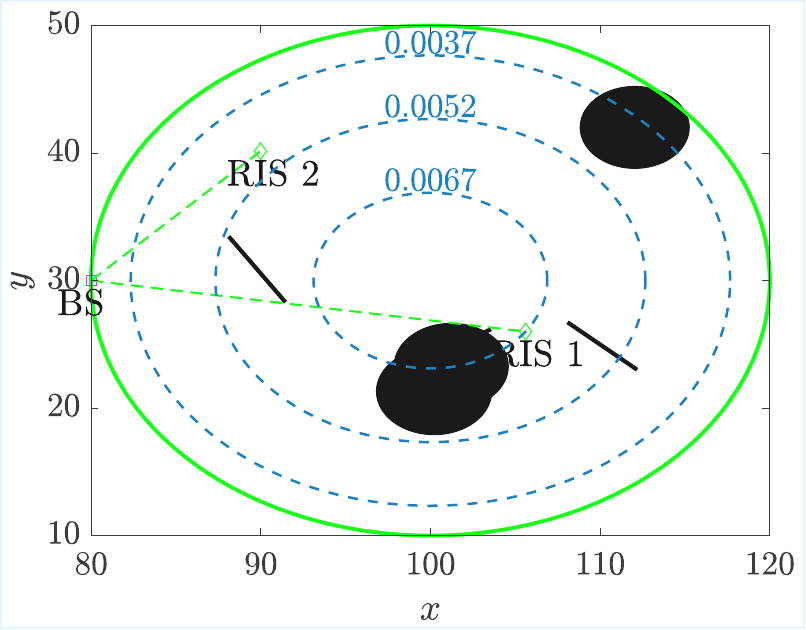}
        \caption{Scenario 4.}
        \label{fig:sc4}
    \end{subfigure}
     \caption{Obstacle Configurations, User distributions, and RIS placement in all scenarios.}
     \label{fig:allscenariosfina}
\end{figure*}
\begin{figure*}
    \begin{subfigure}[b]{0.51\columnwidth}
        \includegraphics[width=\linewidth]{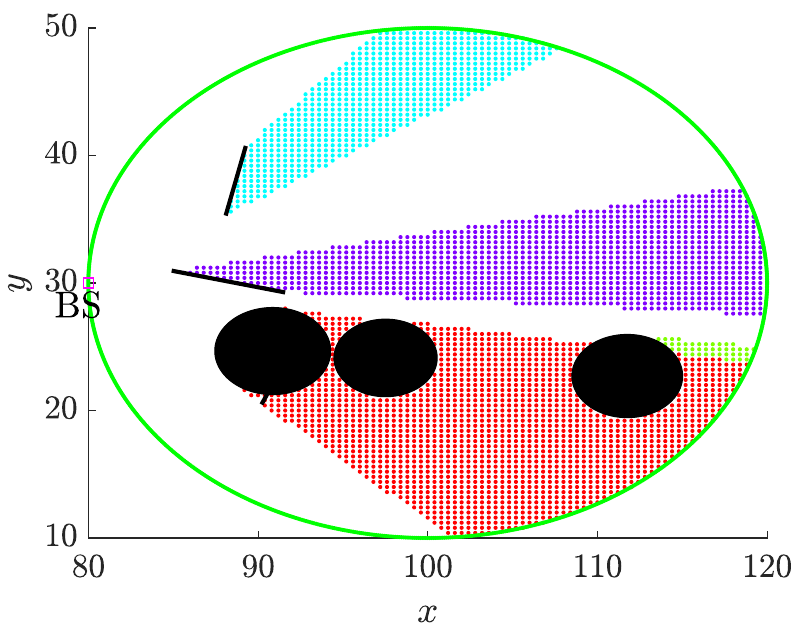}
        \caption{Obstacle Clusters.}
        \label{fig:pcs0}
    \end{subfigure}
    \hfill
   %---------- ROW 1 ----------
    \begin{subfigure}[b]{0.45\columnwidth}
        \includegraphics[width=\linewidth]{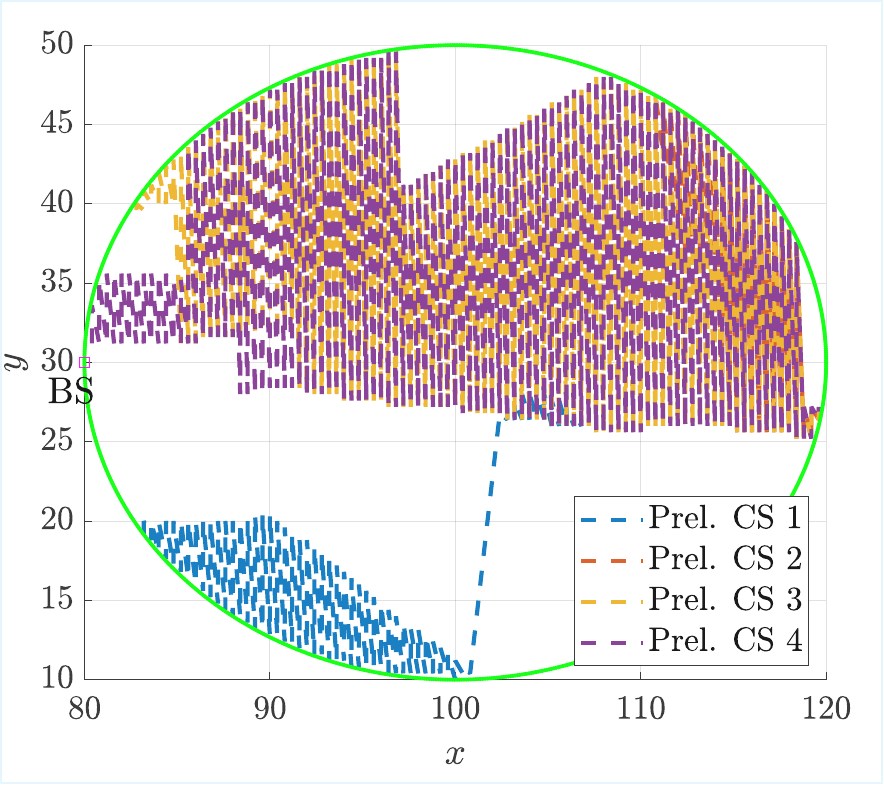}
        \caption{Preliminary Candidate Sets.}
        \label{fig:pcs1}
    \end{subfigure}
    \hfill
     % Pushes the next subfigure to the right
    \begin{subfigure}[b]{0.49\columnwidth}
        \includegraphics[width=\linewidth]{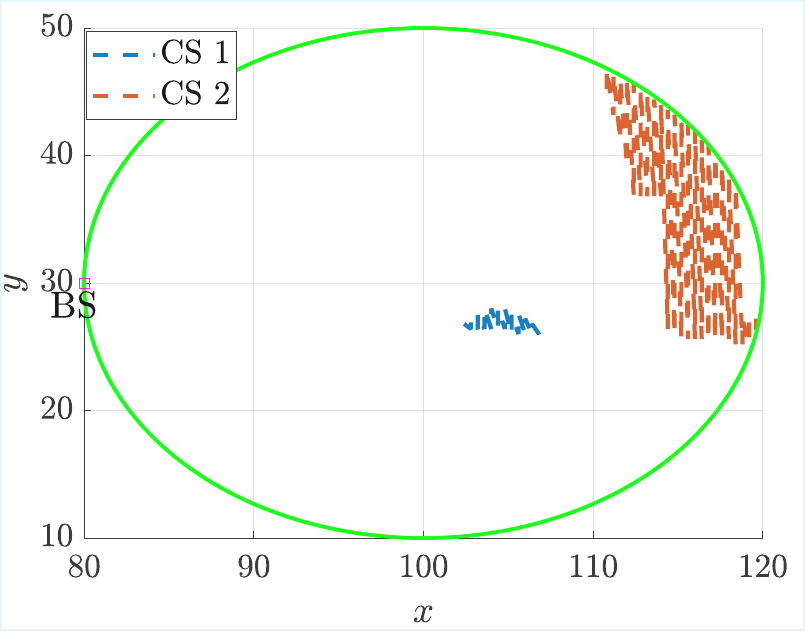}
        \caption{Final Candidate Sets.}
        \label{fig:fcs}
        \end{subfigure}
     \hfill
    \begin{subfigure}[b]{0.52\columnwidth}
        \includegraphics[width=\linewidth]{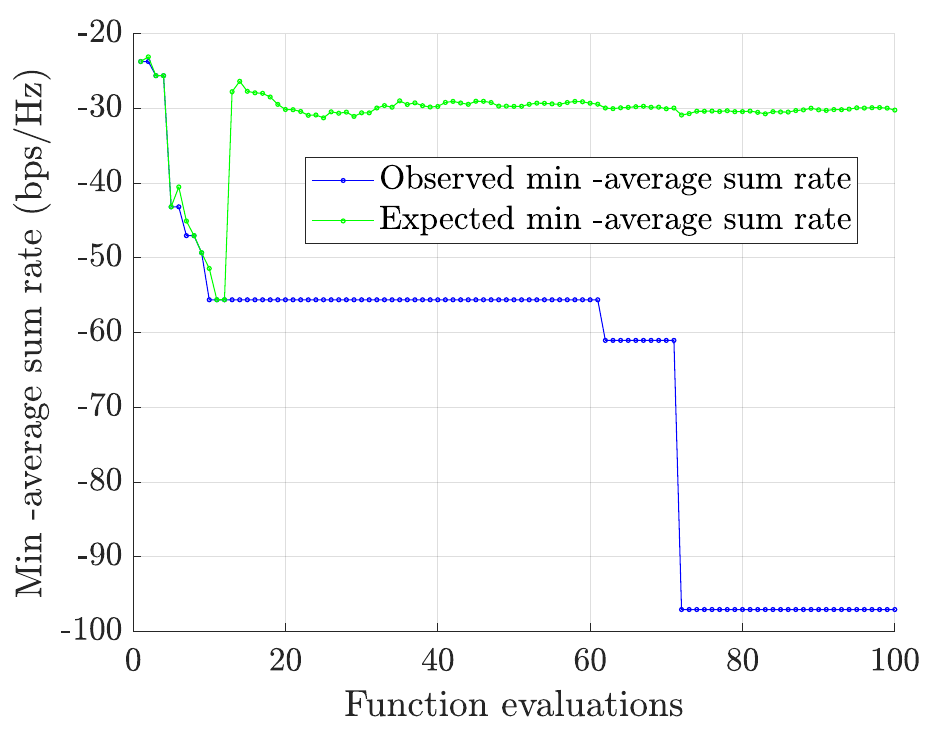}
        \caption{Convergence Behavior.}
        \label{fig:pcs3}
    \end{subfigure}
     \caption{Proposed Algorithm through Scenario 1.}
     \label{fig:procedure}
\end{figure*}
\begin{table*}[h!]
\caption{User Distributions.}
\label{table_distributions}
\centering
\begin{tabular}{|c|c|c|}
\hline
\textbf{Distribution} & \textbf{Definition} & \textbf{Parameters} \\
\hline
$f_1$ & $\lambda(\mathbf{s}) = \lambda_0$, $\forall \mathbf{s} \in \mathcal{S}\setminus\mathcal{Q}$ & $\lambda_0 = 0.003$ users/m$^2$ \\
\hline
$f_2$ & $\lambda(\mathbf{s}) = \begin{cases} \lambda_i & \mathbf{s} \in \mathcal{H}_i\setminus\mathcal{Q},\ i=1,2,3 \\ 0 & \text{otherwise} \end{cases}$ & $\mathcal{H}_1 = \text{Circle}((99, 42),\ 4)$, $\lambda_1 = 0.03$ users/m$^2$ \\
 & where $\mathcal{H}_i = \text{Circle}(\mathbf{c}_i, r_i)$ is a disk of & $\mathcal{H}_2 = \text{Circle}((100, 30),\ 8)$, $\lambda_2 = 0.015$ users/m$^2$\\
 & radius $r_i$ centered at $\mathbf{c}_i$ & $\mathcal{H}_3 = \text{Circle}((100, 15),\ 5)$, $\lambda_3 = 0.015$ users/m$^2$\\
\hline
$f_3$ & $\lambda(\mathbf{s}) = \lambda_0 \exp\!\left( -\dfrac{\|\mathbf{s} - \boldsymbol{\mu}\|^2}{2\sigma^2} \right), \forall\boldsymbol{s}\in\mathcal{S}\setminus\mathcal{Q}$ & $\lambda_0 = 7.5 \times 10^{-3}$ users/m$^2$ \\
 & & $\boldsymbol{\mu} = (100, 30)$~m, $\sigma = 15$~m \\
\hline
\end{tabular}
\end{table*}

\section{Results and Discussion}
\label{sec:results}
The path loss parameters are assumed to obey the 3GPP standard for Urban Micro cell (Table B.1.2.1-1 in \cite{3gpp}),
\begin{align}   
    &\beta_{{i}_{\text{dB}}} = 22\log_{10}(d_i) + 28 + 20\log_{10}(f_c) \text{, }i\in\{(b,j),(j,k)\},\\
    &\beta_{{d,k}_{\text{dB}}} = 36.7\log_{10}(d_{d,k}) + 22.7 + 26\log_{10}(f_c),
\end{align}
where $d_i$ represents the Euclidean distance traversed by each link in meters (m) and $f_c$ is the carrier frequency in GHz. $f_c$ is set as $5$ GHz. For example, we get a loss of $\beta_{{b,j}_{\text{dB}}} = \beta_{{j,k}_{\text{dB}}} = 63.98$ dB and $\beta_{{d,k}_{\text{dB}}} = 77.57$ dB for $d_{b,j} = d_{j,k} = d_{d,k} =  10$~m. Then the Rician factors are set to $T_{1_{\text{dB}}} = T_{2_{\text{dB}}} = 10$ dB. The transmit power is set to $P_{T_{\text{dB}}} = 30$ dBm and noise level $\sigma^2_{\text{dB}} = -100$ dBm \cite{pathloss} unless mentioned otherwise. Across all simulations, other system parameters are set to $M = 16, N = 100, \epsilon=1,N_{\mathrm{DBSCAN}}=5$. We choose the default $p_{c_{th}}=0.75$.

\subsection{Obstacle Configurations and User distributions}
The BS is assumed to be at $\boldsymbol{o}_b = (80$ m$,30$ m$)$ at a corner of the cell as in Fig. \ref{fig:sysmodel} and the cell radius is fixed as $R = 20$~m. 
To analyze the proposed algorithm under various types of environments, we consider 3 different obstacle configurations as depicted in Fig.~\ref{fig:allscenariosfina}. We also consider three user distributions, i.e., $f_1$, $f_2$ and $f_3$ as defined in Table \ref{table_distributions}. $f_1$ is a uniform distribution over the whole cell. $f_2$ and $f_3$ both are inhomogeneous distributions -- where the former involves three user hotspots and the latter a  multi-dimensional Gaussian. It is notable that the densities are defined across distributions such that the average number of users in the cell remains $5$. Based on the user distributions and obstacle configurations, four scenarios are defined as in Table~\ref{table_results} and Fig.~\ref{fig:allscenariosfina}.  

% \begin{figure*}
%    %---------- ROW 1 ----------
%     \begin{subfigure}[b]{0.51\columnwidth}
%         \includegraphics[width=\linewidth]{Figures/Results/2.CS_cropped.pdf}
%         \caption{Scenario 1.}
%         \label{fig:scenario1}
%     \end{subfigure}
%     \hfill
%      % Pushes the next subfigure to the right
%     \begin{subfigure}[b]{0.50\columnwidth}
%         \includegraphics[width=\linewidth]{Figures/Results/Scenario2_CS_cropped.pdf}
%         \caption{Scenario 2.}
%         \label{fig:scenario2}
%         \end{subfigure}
% \hfill
%     %---------- ROW 2 ----------
%     \begin{subfigure}[b]{0.49\columnwidth}
%         \includegraphics[width=\linewidth]{Figures/Results/Scenario3_CS_cropped.pdf}
%         \caption{Scenario 3.}
%         \label{fig:scenario3}
%     \end{subfigure}
%     \hfill % Pushes the next subfigure to the right
%     \begin{subfigure}[b]{0.51\columnwidth}
%         \includegraphics[width=\linewidth]{Figures/Results/Scenario4_CS_cropped.pdf}
%         \caption{Scenario 4.}
%         \label{fig:scenario4}
%     \end{subfigure}
%      \caption{Results across scenarios for Search Space Determination and Minimization of number of RISs.}
%      \label{fig:allscenarios}
% \end{figure*}

\begin{table}[h!]
\caption{Results across Scenarios.}
\label{table_results}
\centering
\resizebox{\columnwidth}{!}{\begin{tabular}{|c|c|c|c|c|c|c|}
\hline
\textbf{Scenario} & \textbf{Obstacle} & \textbf{User} & \textbf{J} & \textbf{Optimal RIS Locations} & \textbf{Sum rate} & \textbf{Coverage} \\
 & \textbf{Config.} & \textbf{Dist.} & & \textbf{(m,m)} & \textbf{(bps/Hz)} & \textbf{(\%)} \\
\hline
$1$ & $1$ & $f_1$ & $2$ & $(106.40, 26.00)$ & $29.2375$ & $96.52$ \\
 &  &  &  & $(116.20, 41.83)$ &  &  \\
$2$ & $1$ & $f_2$ & $1$ & $(101.34, 27.18)$                & $28.0848$ & $94.96$ \\
3 & 2 & $f_2$ & 1 & $(105.00, 25.40)$                & $28.2473$ & $98.83$ \\
4 & 3 & $f_3$ & 2 & $(105.61, 26.00)$                & $30.2920$ & $99.28$ \\
 &  &  &  & $(89.99, 40.15)$ &  &  \\
%$3$ & $2$ & $f_2$ & $1$ & $(106.77, 25.99)$                & $29.4564$ & $99.22$ \\
\hline
\end{tabular}}
\end{table}

\subsection{Performance metrics}
Across this section, \textit{Average sum rate} denotes the sample mean of the sum rate of system over multiple instantiations of users and the random channel. \textit{Coverage} $\bar{p}_c$ is given by
\begin{equation}
    \bar{p}_c = \frac{\sum_{\boldsymbol{s} \in \mathcal{S} \setminus \mathcal{Q}} \Lambda(\boldsymbol{s})\,\boldsymbol{1}_{\{L_{\mathrm{net}}(\boldsymbol{s})\ge 1\}}}{\sum_{\boldsymbol{s} \in \mathcal{S} \setminus \mathcal{Q}} \Lambda(\boldsymbol{s})},
\end{equation}
with a grid resolution of $0.4$~m. This serves as a standard empirical estimate for our definition of $p_{\mathrm{cov}}$ \cite{loc1}. Since we assume solid obstacles, these points cannot be extended coverage, so these are omitted in the above equation.%($\tau_{ov} = 0$)

By default, simulations assume perfect CSI. For imperfect CSI simulations, the degree of imperfection is quantified by Normalized Mean Square Error (NMSE),
\begin{equation}
    \varrho = \frac{\mathbb{E}\left[|\boldsymbol{h} - \hat{\boldsymbol{h}}|^2\right]}{\mathbb{E}\left[|\boldsymbol{h}|^2\right]},
\end{equation}
where $\boldsymbol{h}$ and $\hat{\boldsymbol{h}}$ denote the true and estimated value of both $\boldsymbol{h}_d$ and $\mathbf{H}_{j,k}$, respectively. For a fair comparison across all simulations, we assume that the channel remains constant over a sufficiently large coherence interval  ($\tau_{tot}\rightarrow\infty$). 
% \begin{remark}
% Though we present results for $\tau_{tot}\rightarrow \infty$, since the factor $(1-\frac{\tau_{ov}}{\tau_{tot}})$ is independent of both $\boldsymbol{W}$ and $\boldsymbol{\Theta}$ in $\mathbf{P0}$, the proposed algorithm holds even when $\tau_{tot}$ is finite.
% \end{remark}
\begin{remark}
Though a Gaussian error of $\varrho$ is introduced to each direct and indirect link, the realized net NMSE of the cascaded 
BS-RIS-user channel $\boldsymbol{H}_{j,k}$ may slightly differ from $\varrho$ due to error 
propagation from the direct-link estimation.
\end{remark}
\subsection{Illustration of the proposed method using Scenario 1}
We illustrate the procedure of the proposed method in Fig. \ref{fig:procedure} for Scenario 1. Fig. \ref{fig:pcs0} illustrates the discretization of the space and the regions shadowed by ``large obstacles'', $\{\mathcal{U}_i\}_{i=1}^T$. Then, Fig. \ref{fig:pcs1} and Fig. \ref{fig:fcs} depict the preliminary and final candidate sets, respectively. Fig.~\ref{fig:pcs3} depicts the convergence analysis of the BO employed to jointly search over the reduced search space. Because the BO algorithm is implemented as a standard minimization problem, the objective is defined as the negative sum-rate, which is reflected on the vertical axis. Finally, Fig. \ref{fig:sc1} depicts the converged solution.

\subsection{Results across scenarios}
Fig. \ref{fig:allscenariosfina} presents the optimal RIS placement obtained for each scenario and Table \ref{table_results} lists the corresponding performance summary. From these results, we infer that even though Scenarios 1 and 2 have the same obstacle configuration, unlike a purely coverage based approach like \cite{loc1} and many others, our algorithm uses the user distribution to reduce the number of RISs needed from $2$ to $1$, to further optimize locations based on the same, and produces a similar sum rate and coverage performance. In Scenario 2 and 3, the non-homogeneity reduces the area we need to extend coverage to, reducing the number of RISs we need to achieve target coverage and throughput. On the other hand, Scenario 4 with a 2D Gaussian user distribution (with standard deviation of $15$~m) under a new obstacle configuration, requires $2$ RISs like Scenario 1 since the user density is non-zero throughout. Since the Gaussian density function concentrates users in a region that is well-served by the RISs, it results in the highest sum rate of $30.29$ bps/Hz and coverage of $99.28\%$ across 
all scenarios. Furthermore, for our default $p_{c_{th}} = 0.75$, we achieve a coverage of about $97\%$ across scenarios. 
\begin{figure}[t!]
\centering
     % \begin{subfigure}[b]{0.67\columnwidth}
        \includegraphics[width=0.63
        \linewidth]{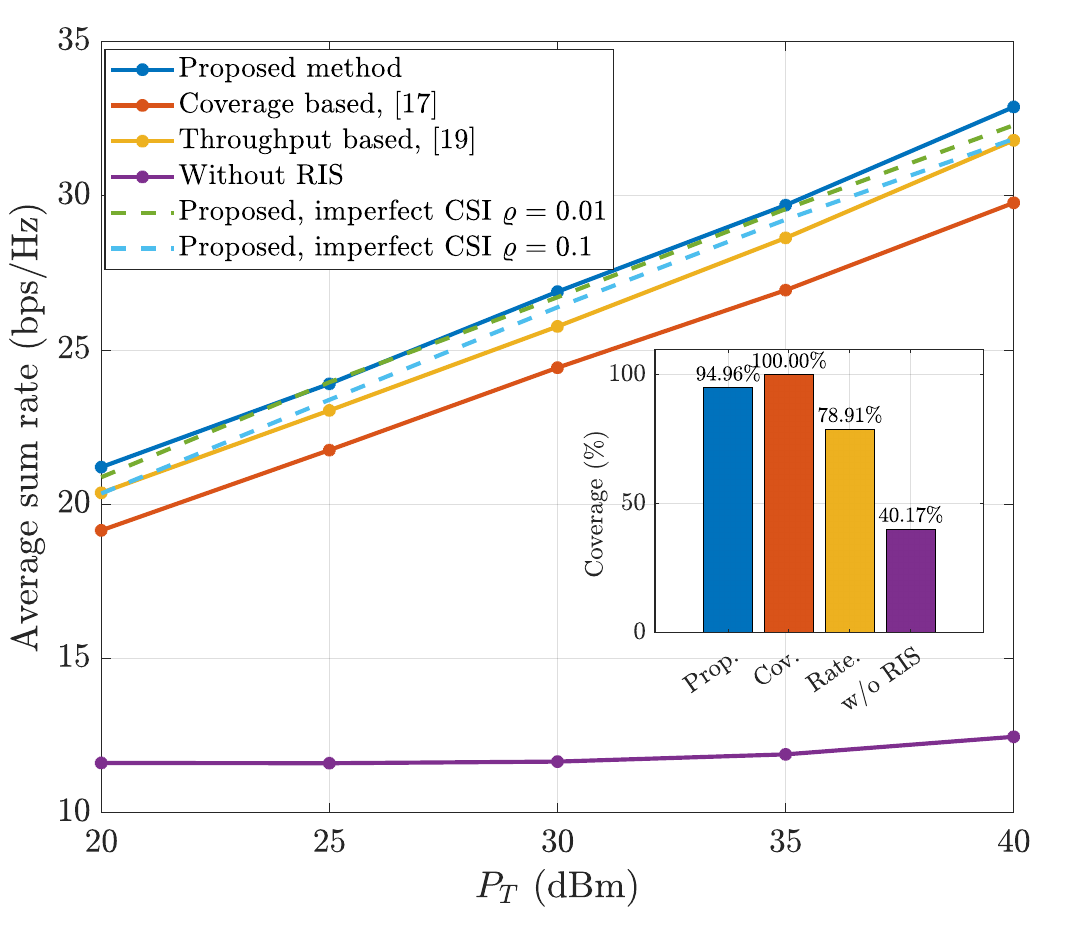}
    \caption{Conventional Performance analysis of proposed algorithm in Scenario 2.\label{fig:conven}}
  \end{figure} 
\begin{figure}[t!]
\centerline{\includegraphics[width=0.3\textwidth]{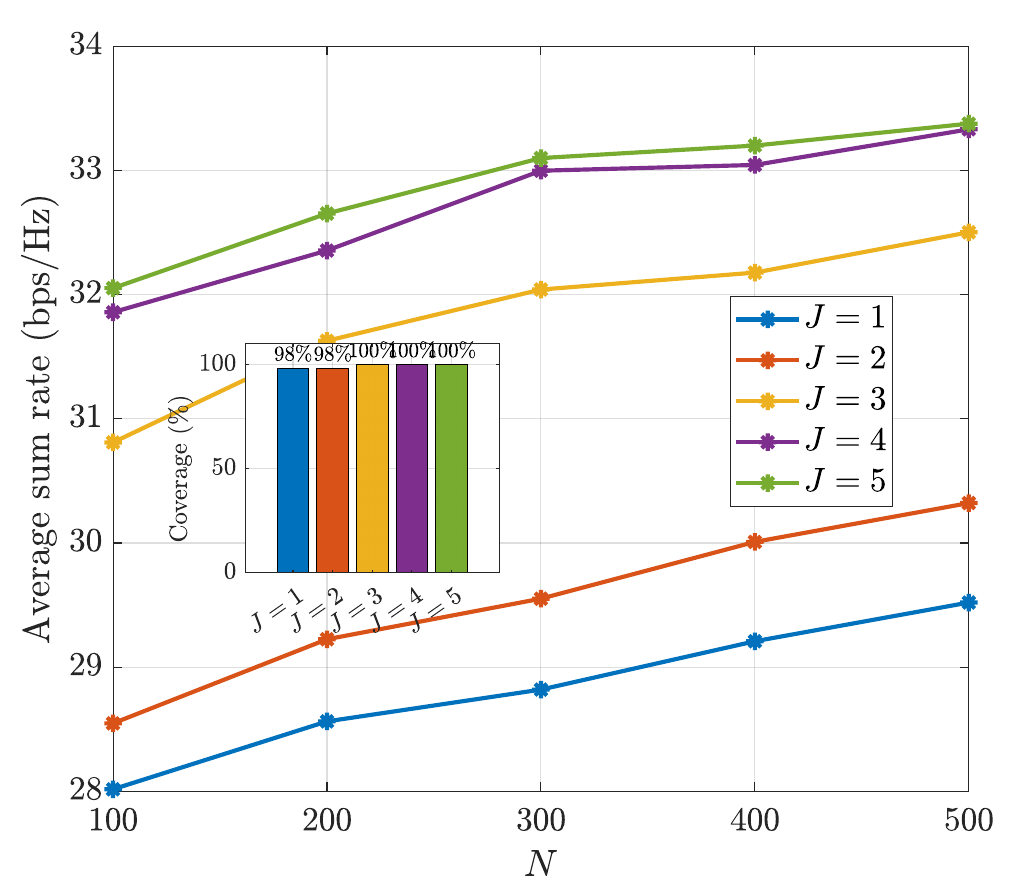}}
\caption{$J$ and $N$ characteristic for Scenario 2. \label{fig:JN}}
\end{figure} 
% \begin{figure}[t!]
% \centerline{\includegraphics[width=0.35\textwidth]{Figures/Results/RateRegion_cropped.pdf}}
% \caption{Pareto boundary for a fixed 2-user case in Scenario 2.\label{fig:rr}}
% \end{figure} 
\subsection{Conventional Communication Metrics}
\label{sec:conven}
For throughput analysis, we assume 3 baselines - (i) A greedy placement that maximizes coverage without a throughput guarantee \cite{loc1}, (ii) Sum rate maximization for placement with no coverage guarantee \cite{loc2}, and (iii) Without an RIS, with WMMSE beamforming. We discuss results for our implementation of these ideas in our simulation environment. More specifically, for \cite{loc1}, we manually search for the best RIS placement that maximizes coverage and add an additional RIS till $p_{c_{th}}$ is achieved. We implement \cite{loc2} by using a similar BO to search for the best RIS placement that maximizes throughput without any coverage constraint. For the $P_T$ characteristic, we implement these for a fixed number of BO iterations.

\subsubsection{$P_T$ characteristic} Average sum rate and Coverage results are discussed for Scenario 2 in Fig. \ref{fig:conven}. We see that even with WMMSE beamforming, there is a large gap between the throughput of the system without an RIS and the other cases. Though we use a similar simulation environment as~\cite{pathloss}, this gap is wider in our case because of the introduction of obstacles. Without an RIS, on average, only 40\% of users are covered. Only these users produce a non-zero SINR, reducing sum rate. We infer that maximizing coverage of an obstacle-deterred system, helps in maximizing its sum rate. However, if we adopt a purely coverage based approach, then we observe that even though 100\% coverage is achieved, a lesser throughput is observed. When we use a purely expected rate approach for RIS placement, the coverage achieved by the converged placement is $80\%$ -- a solid $20\%$ drop. This has an immediate consequence as argued before -- on average, $20\%$ are not served, resulting in negligible SINR in sum rate. As a result of this and owing to fixed number of BO iterations, it results in a lesser mean throughput than the proposed method. %, we observe that

Thus, from this analysis, we infer that generally, if a certain RIS configuration results in better coverage than another, it results in better throughput for an obstacle-deterred system. This pattern saturates at a certain coverage. Attempting to enhance coverage beyond it results in a decrease in throughput. %The proposed algorithm provides a coverage guarantee by restricting the search space, but also maximizes sum rate in that space to perform the best.

\subsubsection{$J$ vs $N$} While designing multi-RIS assisted systems, an important tradeoff to analyze is between the number of RISs and the number of reflecting elements in each of them\footnote{Range of $J$ is chosen as $[1,5]$ based on the considered scenario. Addition of more RISs as illustrated doesn't yield improvement in sum rate.} as in Fig. \ref{fig:JN}. We observe that across $J$, we see an improvement in throughput when we add more elements to each RIS. We observe that adding an additional RIS does not always result in an improvement in throughput. Even though $J=2$ has a similar coverage as $J=1$, we see that $J=2$ has a slightly improved throughput than $J=1$ case because the number of reflecting surfaces has doubled. On the other hand, we observe a considerable improvement in sum rate when a third RIS is added because we see that this RIS increases coverage to a full 100\%. This adds value to our earlier inference that coverage is tied to throughput. A slight increase in throughput is observed when increasing $J=3$ to $J=4$ like from $J=1$ to $J=2$ because of the increase in number of reflecting elements. Finally, we observe that sum rate saturates as $J$ is increased beyond $4$ because full coverage is already  attained. 
%emil1,
\begin{remark}
    Although ~\cite{pathloss,emil2} that prove that theoretically, $\gamma_k \propto$ $N^2$, ~\cite{pathloss} shows mathematically and analytically that the 3GPP Microcell pathloss environment (which we use) has strong direct path link, resulting in a gain of $3$ bps/Hz in an obstacle-free environment for $N\in[100,900]$. So, our achieved gain of about $1.8$ bps/Hz for $N\in[100,500]$ in an obstacle-filled environment is justified. Also, works like \cite{distrib1} compare a centralized deployment of RISs with a distributed one. Conclusions indicate that the distributed placement that maximizes coverage, also yields an improvement in throughput in an obstacle-deterred environment whereas a centralized deployment performs better in an obstacle-free environment. This validates our inferences that coverage and throughput are tied to each other in an obstacle filled environment.
\end{remark}

% \subsubsection{Pareto boundary for a fixed two-user case}
% Finally, to examine the multi-user tradeoff, we consider the pareto boundary plot for a fixed 2-user case (at ($95.6$ m,$45.6$ m) and ($102.0$ m, $16.4$ m). For a fair comparison, we assume simulation parameters to be assumed as in \cite{rateregion} ($M = 2$, $N= 200$, $N_0 = -103$ dBm, similar path loss exponents, and considering no obstacles). For this setup, we observe that our proposed algorithm clearly outperforms theirs with a considerable $8$ bps$^2$/Hz$^2$ improvement in area, further validating our approach to maximize the overall performance of the system.

\begin{figure}[t!]
\centering
     \begin{subfigure}[b]{0.71\columnwidth}
        \includegraphics[width=\linewidth]{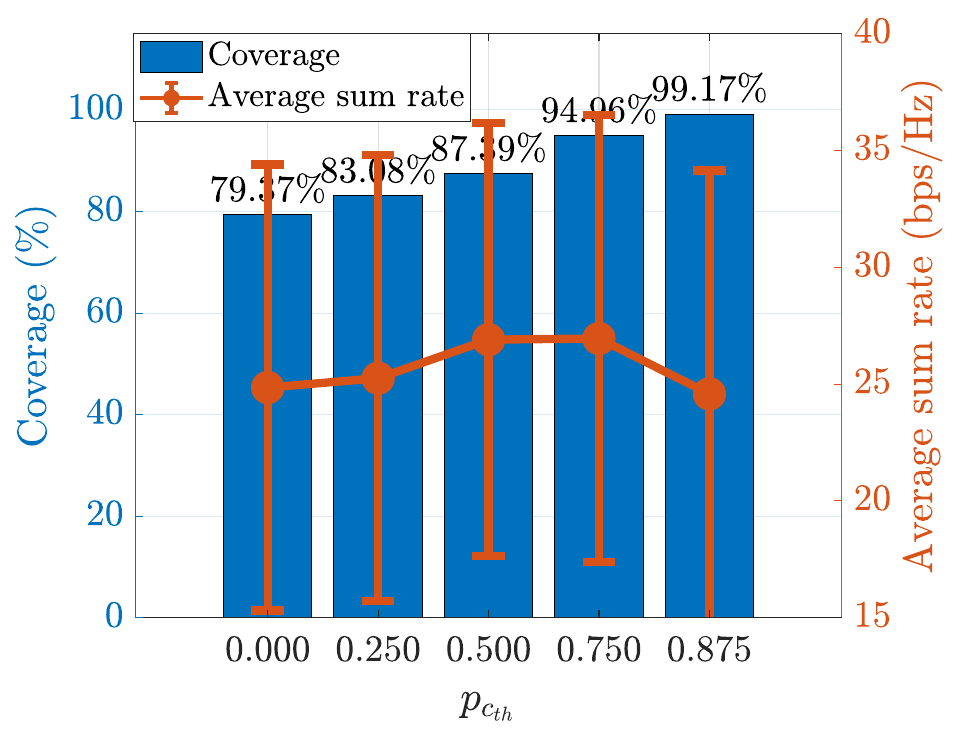}
        \caption{Average sum rate performance and Coverage.}
        \label{fig:pcth}
    \end{subfigure}
  
    \begin{subfigure}[b]{0.61\columnwidth}
        \includegraphics[width=\linewidth]{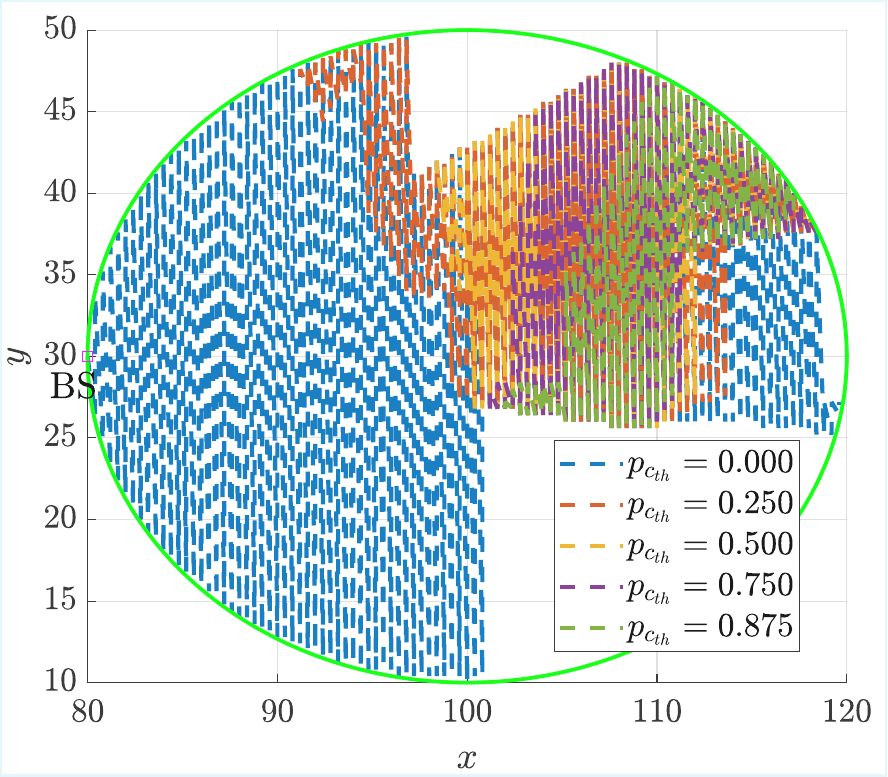}
        \caption{Search Space.}
        \label{fig:pcthsc}
    \end{subfigure}  
    \caption{Parametric Sweep of $p_{c_{th}}$ in Scenario 2.}
  \end{figure} 

\subsection{Parametric sweep of $p_{c_{th}}$}

Having established superior performance across baselines, we want to investigate the choice of an important parameter, $p_{c_{th}}$, for fixed number of BO iterations. In Fig. \ref{fig:pcthsc}, we observe that as we increase it from $0\%$ to $100\%$, the search space for RIS placement reduces from the set of all reachable points from the BS to a much smaller set. In Fig. \ref{fig:pcth}, as $p_{c_{th}}$ increases, the achieved coverage increases uniformly. On the other hand, as $p_{c_{th}}$ increases from $0\%$ to $50\%$, sum rate also increases due to two reasons again -- (i) no/low coverage assurance means more unserved users which in an obstacle deterred environment results in a decrease in average sum rate, (ii) no/low $p_{c_{th}}$ results in a larger search space for the BO. As we restrict the search space further to achieve a higher coverage, it leads to a clear decrease in sum rate. This observation aligns perfectly with the inferences and remarks made in Section \ref{sec:conven}. Thus, if we a good prior for the user process is available, the suggested choice of $p_{c_{th}}$ would be $50\%$ to $75\%$ -- which will assure an increase in coverage and most probably rate too. If reliable user prior isn't available, one can choose a uniform prior, lower $p_{c_{th}}$ and deploy the algorithm with increased BO iterations -- cannot effectively exploit user trends, but the larger search space may provide a generally effective solution. 

% , unless a specific application requires it to be more one-sided.

\section{Conclusion}
\label{sec:conc}

A novel, practically relevant problem formulation and an elegant solution were proposed. The proposed work tackled typical practical requirements of an RIS-assisted system, such as exploiting the user distribution to maximize performance, exploiting the obstacle configuration to maximize the coverage of the system, and handling user randomness. This is the first work, to the best of our knowledge, to handle the ideal hierarchical optimization directly -- maximizing throughput, while also ensuring good coverage. Also, in the proposed solution, a new and powerful approach that combined prior-driven learning with conventional beamforming to maximize expected post-deployment throughput was introduced. 

Results indicated an improvement in sum rate when choosing an RIS placement with improved coverage, but this trend reversed beyond a certain coverage. This behavior underscored a fundamental tradeoff in the design of an obstacle-deterred RIS-assisted system: a purely coverage-driven placement strategy narrows the search to regions that achieve near-complete coverage, but at the cost of placing RISs in locations that are suboptimal for signal enhancement. Conversely, a purely performance-driven strategy risks placing RISs at locations that maximize sum rate for only those users it can reach already. The proposed algorithm navigated this tradeoff by jointly accounting for both objectives, arriving at a placement that achieves strong coverage without sacrificing in sum rate gains. This balance between coverage and performance is what practical RIS deployment demands, and results validate that the proposed framework reliably achieves this balance across a variety of obstacle configurations and user distributions.

In future works, we can expand the idea for RIS-assisted ISAC systems to place RIS to maximize communication throughput and sensing accuracy.

% if have a single appendix:
%\appendix[Proof of the Zonklar Equations]
% or
%\appendix  % for no appendix heading
% do not use \section anymore after \appendix, only \section*
% is possibly needed

% use appendices with more than one appendix
% then use \section to start each appendix
% you must declare a \section before using any
% \subsection or using \label (\appendices by itself
% starts a section numbered zero.)
%

\appendices
\section{Proof of Theorem \ref{lm:cov}}
\label{app:lm1}

 \begin{IEEEproof}
% We want to prove that ensuring that $p_{c_{th}}$ percent coverage in each shadowed region $\mathcal{U}_i$ guarantees~\eqref{subeq:coverage} when the user positions follow the inhomogeneous process $\Psi_{u} = \{\boldsymbol{o}^{(k)}_u\}$ with the user distribution $f_u$.
Recall  $\mathcal{S} = \{\boldsymbol{s}_1, \boldsymbol{s}_2, \dots, \boldsymbol{s}_{N_{\text{grid}}}\}\subset \mathbb{R}^2$ where each $\boldsymbol{s}_i$ corresponds to a region $A_i \in \mathcal{O}$ such that the intensity is $\Lambda(\boldsymbol{s}_i) =\int_{A_i}\lambda_u(\boldsymbol{o})d\boldsymbol{o}$. Therefore, the pdf in~\eqref{eq:fu_def} reduces to a probability mass function (pmf) given by,
\begin{equation}
p_{\tilde{\boldsymbol{s}}}=\Pr(\boldsymbol{s}=\tilde{\boldsymbol{s}}) = \frac{\Lambda(\tilde{\boldsymbol{s}})}{\Lambda_u}, \tilde{\boldsymbol{s}} \in \mathcal{S}.
\end{equation}
Also, coverage probability~\eqref{eq:covdef} reduces to 
\begin{equation}
p_{cov}(\boldsymbol{O}_r) = \sum_{\tilde{\boldsymbol{s}}\in\mathcal{S}} \boldsymbol{1}_{\{L_{\mathrm{net}}(\tilde{\boldsymbol{s}}) \geq 1\}} \Pr(\boldsymbol{s}=\tilde{\boldsymbol{s}}),
\end{equation}
for an RIS deployment $\boldsymbol{O}_r$. From definitions, 
$\mathcal{S}=\mathcal{S}^*\cup\mathcal{S}^{\prime}$ is a disjoint partition, $\mathcal{U}=\mathcal{S}^*\setminus\mathcal{Q}\setminus\mathcal{V}$, and $\mathcal{U} = \bigcup_{i=1}^{T} \mathcal{U}_i$ is a disjoint partition. Let $\mathcal{V}^* = \mathcal{V}\setminus\mathcal{S}^{\prime}$ and $\mathcal{Q}^* = \mathcal{Q}\setminus\mathcal{S}^{\prime}$ respectively be the set of visible and obstacle points having a positive user intensity.  Since we model obstacles of solid circular and line types, users cannot be meaningfully present on/inside any of them, $\lambda_u(\boldsymbol{s})=0, \forall\boldsymbol{s}\in \mathcal{Q}$. Therefore, $\mathcal{Q}^* = \emptyset$. That yields a disjoint partition, $\mathcal{S} = \mathcal{S}^\prime\cup\mathcal{V}^* \cup \bigcup_{i=1}^{T} \mathcal{U}_i$%. Since  $\boldsymbol{1}_{\{L_{\mathrm{net}(\tilde{\boldsymbol{s}})\ge1}\}}$ and $\Pr(\boldsymbol{s}=\tilde{\boldsymbol{s}})$ are non-negative quantities, the sum over a subset of $\mathcal{S}$ yields the bound%We further partition $\mathcal{Q}^*$ by visibility -- $\mathcal{Q}_{\mathcal{V}}^*=\mathcal{Q}^*\cap\mathcal{V}$ and $\mathcal{Q}_{\mathcal{V}^c}^*=\mathcal{Q}^*\cap(\mathcal{S}\setminus\mathcal{V})$.
\begin{align}
p_{cov}(\boldsymbol{O}_r) =\sum_{i\in\{\mathcal{S}^\prime,\mathcal{V}^*,\mathcal{U}_1,\dots\mathcal{U}_T\}}\sum_{\tilde{\boldsymbol{s}}\in i} \boldsymbol{1}_{\{L_{\mathrm{net}(\tilde{\boldsymbol{s}})} \geq 1\}} \Pr(\boldsymbol{s}=\tilde{\boldsymbol{s}}).
% \\
% &+ \int_{\mathcal{S^{\prime}}} \boldsymbol{1}_{\{L_{\mathrm{net}(\boldsymbol{s})} \geq 1\}} f_u(\boldsymbol{s}) d\boldsymbol{s}\\
% &+\int_{\mathcal{V}^*} \boldsymbol{1}_{\{L_{\mathrm{net}(\boldsymbol{s})} \geq 1\}} f_u(\boldsymbol{s}) d\boldsymbol{s}\\ &+ \sum_{i=1}^T \int_{\mathcal{U}_i} \boldsymbol{1}_{\{L_{\mathrm{net}}(\boldsymbol{s}) \geq 1\}} f_u(\boldsymbol{s}) d\boldsymbol{s}.
\end{align}
%Since it is a non-disjoint union ($\tilde{\mathcal{Q}}\cap\mathcal{S}\ne \emptyset$), we use the union bound. Also, by definition, for all $\boldsymbol{s}_1 \in \mathcal{V}^*, L(\boldsymbol{s}_b, \boldsymbol{s}_1) = 1$ and for all $\boldsymbol{s}_2 \in \tilde{\mathcal{Q}}, L(\boldsymbol{s}_b, \boldsymbol{s}_2) = 0$. 
Manipulating~\eqref{eq:local} further we get \begin{align}
&\frac{\frac{1}{\Lambda_u}\sum_{\tilde{\boldsymbol{s}}\in\mathcal{U}_i} \boldsymbol{1}_{\{L_{\mathrm{net}(\tilde{\boldsymbol{s}})} \geq 1\}} \Lambda(\tilde{\boldsymbol{s}})}{\frac{1}{\Lambda_u}\sum_{\tilde{\boldsymbol{s}}\in\mathcal{U}_i}\Lambda(\tilde{\boldsymbol{s}})}  \geq p_{c_{\mathrm{th}}},\\
&\frac{\sum_{\tilde{\boldsymbol{s}}\in\mathcal{U}_i} \boldsymbol{1}_{\{L_{\mathrm{net}(\tilde{\boldsymbol{s}})} \geq 1\}} \Pr(\boldsymbol{s}=\tilde{\boldsymbol{s}})}{\sum_{\tilde{\boldsymbol{s}}\in\mathcal{U}_i}\Pr(\boldsymbol{s}=\tilde{\boldsymbol{s}})}  \geq p_{c_{\mathrm{th}}}.
\end{align}
By definition, $\boldsymbol{1}_{\{L_{\mathrm{net}(\tilde{\boldsymbol{s}})} \geq 1\}} = 1$ for all $\tilde{\boldsymbol{s}} \in \mathcal{V}^*$ and $\Pr(\boldsymbol{s}=\tilde{\boldsymbol{s}}) = \Lambda(\tilde{\boldsymbol{s}})/\Lambda_u=0$ for all $\tilde{\boldsymbol{s}} \in \mathcal{S}^\prime$. Using these we have
\begin{equation}
p_{cov}(\boldsymbol{O}_r) \geq \sum_{\tilde{\boldsymbol{s}}\in\mathcal{V}^*} \Pr(\boldsymbol{s}=\tilde{\boldsymbol{s}}) + p_{c_{\mathrm{th}}} \sum_{i=1}^T  \sum_{\tilde{\boldsymbol{s}}\in\mathcal{U}_i} \Pr(\boldsymbol{s}=\tilde{\boldsymbol{s}}).
\end{equation}
% Consider $p_{\mathcal{V}^*} = \int_{\mathcal{V}^*} f_u(\boldsymbol{s}) d\boldsymbol{s}$, $p_{\mathcal{U}_i} = \int_{\mathcal{U}_i} f_u(\boldsymbol{s}) d\boldsymbol{s}$, $p_{\mathcal{S}^*} = \int_{\mathcal{S}^*} f_u(\boldsymbol{s}) d\boldsymbol{s}$ and $p_{\mathcal{S}} = \int_{\mathcal{S}} f_u(\boldsymbol{s}) d\boldsymbol{s} = 1$. Since $\mathcal{V}^* \cup (\bigcup_{i=1}^{T} \mathcal{U}_i) = \mathcal{S}^*$ is a disjoint union,
% we have $p_{\mathcal{S}^*}=p_{\mathcal{V}^*}+\sum_{i=1}^{T}p_{\mathcal{U}^*}$. Recall that $\mathcal{S}^*$ is the set of all points that have positive user intensity. Since the user intensities $\lambda_u(\boldsymbol{s})$ are non-negative at all grid points $\boldsymbol{s}$ (which is the meaningful case) and using~\eqref{eq:fu_def}, we get that $p_{\mathcal{S}^*}=p_{\mathcal{S}}=1$. Therefore,
\begin{align}
&\Rightarrow p_{cov}(\boldsymbol{O}_r) \geq \Pr(\tilde{\boldsymbol{s}}\in\mathcal{V}^*) + p_{c_{\mathrm{th}}} \Pr(\tilde{\boldsymbol{s}}\in\mathcal{U})
%(1 - p_{\mathcal{V}^*}),
%&\Rightarrow p_{cov}(\boldsymbol{O}_r) \geq.
\end{align}
Since $\mathcal{S}$ is our universal set, i.e., $\Pr(\tilde{\boldsymbol{s}}\in\mathcal{S})=1$ and $\Pr(\tilde{\boldsymbol{s}}\in\mathcal{S}^\prime)=0$ by definition, we have
\begin{align}
p_{cov}(\boldsymbol{O}_r) &\geq \Pr(\tilde{\boldsymbol{s}}\in\mathcal{V}^*) + p_{c_{\mathrm{th}}} (1-\Pr(\tilde{\boldsymbol{s}}\in\mathcal{V}^*))\\
&=p_{c_{\mathrm{th}}}+\Pr(\tilde{\boldsymbol{s}}\in\mathcal{V}^*)(1-p_{c_{\mathrm{th}}})
%(1 - p_{\mathcal{V}^*}),
%&\Rightarrow p_{cov}(\boldsymbol{O}_r) \geq.
\end{align}
Since $0 \le p_{c_{\mathrm{th}}},\Pr(\tilde{\boldsymbol{s}}\in\mathcal{V}^*) \leq 1$, $p_{cov}(\boldsymbol{O}_r) \geq p_{c_{\mathrm{th}}}$ follows.
\end{IEEEproof}

\section{Proof of Proposition \ref{prop:1}}
\label{app:prop1}
\begin{IEEEproof}
We consider multiple RISs that provide independent single hop links for the downlink transmission to users. Also recall $\mathcal{U}=\mathcal{S}^*\setminus\mathcal{Q}\setminus\mathcal{V} = \bigcup_{i=1}^T\mathcal{U}_i$ only has points that do not have a direct link from the BS, but has a chance of user presence. So, if each potential RIS location is guaranteed to have a direct path to the BS, i.e., $\boldsymbol{s}\in\mathcal{V}$ and if this potential RIS point has a direct path to a potential user $\boldsymbol{u}$, i.e., $(\boldsymbol{s},\boldsymbol{u})\in E_{\mathrm{vis}}$, then $\boldsymbol{u}$ has at least one path from BS, satisfying $L_{\mathrm{net}(\tilde{\boldsymbol{s}})}\ge1$
in~\eqref{eq:local}. Since~\eqref{eq:local} was specifically derived to assure $p_{c_{th}}$ percent coverage in each $\mathcal{U}_i$, we directly  get~\eqref{eq:candidate_set} to be the corresponding RIS candidate set where if an RIS is positioned in one of those locations, it is guaranteed to cover $p_{c_{th}}$ percent of each  $\mathcal{U}_i$. From Theorem \ref{lm:cov}, since $p_{c_{th}}$ percent coverage in each $\mathcal{U}_i,\text{ } 1\le i\le T$ is assured,~\eqref{subeq:coverage} is satisfied.\end{IEEEproof}

\section{Proof of Proposition \ref{prop:2}}
\label{app:prop2}
\begin{IEEEproof}
Let $\mathcal{P} = \{e_1,\ldots,e_J\}$ be a valid partition on $V_{\mathrm{int}}$. A hyperedge $e_j =(C^\prime_{j_1}, \dots, C^\prime_{j_f}) \in \mathcal{P}$ indicates a non-empty $C_j=\bigcap_{i=j_1}^{j_f}C^{\prime}_i\ne \emptyset$. Suppose we deploy an RIS at a $\boldsymbol{o}^{(j)}_r\in C_j$. Since $\boldsymbol{o}^{(j)}_r\in C^\prime_i,i\in\{j_1,\dots j_f\}$, by Proposition \ref{prop:1}, it assures $p_{c_{th}}$ percent coverage in each $\{\mathcal{U}_i\}_{i=j_1}^{j_f}$. Since $\bigcup_{i=1}^Je_i=V_{\mathrm{int}}=\{C^\prime_i\}_{i=1}^T$, deploying $J$ RISs in each $\{C_j\}_{j=1}^J$ constructed this way assures~\eqref{eq:local} for all $\{\mathcal{U}_i\}_{i=1}^{T}$, and therefore by Theorem~\ref{lm:cov} satisfies~\eqref{subeq:coverage}. Since every point $\boldsymbol{o}^{(i)}_r\in C^\prime_i\text{ }\forall 1 \le i \le T$ lies in $\mathcal{V}$ by Proposition~\ref{prop:1}, every $\boldsymbol{o}^{(j)}_r\in C_j\text{ }\forall 1 \le j \le J$ also lies in $\mathcal{V}$, satisfying~\eqref{subeq:BS}. Thus, for every valid partition $\mathcal{P}$, we have an $\boldsymbol{O}_r=[\boldsymbol{o}^{(1)}_r,\dots,\boldsymbol{o}^{(J)}_r]\text{ s.t. }\boldsymbol{o}^{(j)}_r\in C_j\text{ }\forall 1 \le j \le J$ that lies in the constraint space of $\mathbf{P1}$. To minimize $J=\mathrm{cols}(\boldsymbol{O_r})$, we  compute that partition with least cardinality in Proposition~\ref{prop:2}.\end{IEEEproof}
% From Theorem \ref{prop:1}, each preliminary candidate set $C'_{i}$ assures $p_{c_{th}}$ percent of $\mathcal{U}_i$ is covered. Thus, a hyperedge $(C'_{j_1}, \dots, C'_{j_f})$ signifies the existence of $\bigcap_{i=j_1}^{j_f}C^{\prime}_i$ that simultaneously covers $p_{c_{th}}$ percent of all of $\mathcal{U}_{j_1}, \dots, \mathcal{U}_{j_f}$. Therefore, we define
% \begin{equation}
%     C_i = \bigcap_{j\in e_i}C^{\prime}_j, \quad 1\le j\le J,
% \end{equation}
% to be the final collection of candidate RIS sets that can assure $p_{c_{th}}$ percent for all $\{\mathcal{U}_i\}_{i=1}^T$. From $\mathbf{P1}$, we need to ensure $J$ is minimized within this constraint space. Mathematically, we want to minimize the partition $\mathcal{P} = \{e_1,\ldots,e_J\}$ such that $\bigcup_{i=1}^Je_i=V_{\mathrm{int}}$. This problem is defined by 
% \begin{align}
%     &\min_{\mathcal{P}} |\mathcal{P}|\\
%     \text{s.t.}\quad & e_i \in E_{\mathrm{int}}, \forall e_i \in \mathcal{P}.
% \end{align}
% Therefore, if we partition the hypergraph into a set of hyperedges such that cardinality of the partition is the minimized, we get the least possible number of RISs we can deploy that satisfies $p_{c_{th}}$ percent coverage for all $T$ shadowed regions. As an extension, we get each reduced candidate set to be the intersection of all vertices of the corresponding hyperedge.

% use section* for acknowledgment
% \section*{Acknowledgment}

% The authors would like to thank...

% Can use something like this to put references on a page
% by themselves when using endfloat and the captionsoff option.
\ifCLASSOPTIONcaptionsoff
  \newpage
\fi

% trigger a \newpage just before the given reference
% number - used to balance the columns on the last page
% adjust value as needed - may need to be readjusted if
% the document is modified later
%\IEEEtriggeratref{8}
% The "triggered" command can be changed if desired:
%\IEEEtriggercmd{\enlargethispage{-5in}}

% references section

% can use a bibliography generated by BibTeX as a .bbl file
% BibTeX documentation can be easily obtained at:
% http://mirror.ctan.org/biblio/bibtex/contrib/doc/
% The IEEEtran BibTeX style support page is at:
% http://www.michaelshell.org/tex/ieeetran/bibtex/
\bibliographystyle{IEEEtran}
% argument is your BibTeX string definitions and bibliography database(s)
\bibliography{another}

\end{document}